\documentclass[10pt,journal,letterpaper]{IEEEtran}

\usepackage{amsmath,amssymb,amsfonts}
\usepackage{mathtools}
\usepackage{graphicx}
\usepackage{booktabs}
\usepackage{microtype}
\usepackage{xspace}
\usepackage{url}
\usepackage{hyperref}
\usepackage{algorithm}
\usepackage[noend]{algpseudocode}

\usepackage{textcomp}
\usepackage{stfloats}
\usepackage{url}
\usepackage{verbatim}
\usepackage{graphicx}
\usepackage{cite}
\usepackage{amssymb}
\usepackage{ifthen}
\usepackage[utf8]{inputenc}
\usepackage{latexsym, dsfont}
\usepackage{amsmath,amsthm}
\usepackage{thmtools}
\usepackage[inline]{enumitem}
\usepackage{comment}
\usepackage{siunitx}
\usepackage{caption}
\usepackage{hyperref}
\hypersetup{colorlinks=true,allcolors=blue}   
\usepackage[capitalize]{cleveref}
\usepackage{todonotes}
\usepackage[caption=false,font=footnotesize]{subfig}
\usepackage{mathtools}
\usepackage{tikz}
\usepackage{balance}

\newboolean{PGFPlots}
\setboolean{PGFPlots}{true}
\ifthenelse{\boolean{PGFPlots}}{
  \usepackage{pgfplots}
  \usepgfplotslibrary{units,groupplots}
  \usetikzlibrary{external}
  \tikzexternalize
  \tikzset{external/only named=true}
}{}

\newcommand{\tool}{CAPDR\xspace} 
\newcommand{\pdr}{PDR\xspace}
\newcommand{\icthree}{IC3\xspace}
\newcommand{\Inv}{\mathit{Inv}}
\newcommand{\Init}{\mathit{Init}}
\newcommand{\Trans}{\mathit{Trans}}
\newcommand{\Prop}{\mathit{Prop}}
\newcommand{\Reach}{\mathit{Reach}}
\newcommand{\Def}{\mathit{Def}}

\newtheorem{dfn}{Definition}

\newcommand{\maybeincludegraphics}[2][]{%
  \IfFileExists{#2}{\includegraphics[#1]{#2}}{%
    \fbox{\parbox[c][0.18\textheight][c]{0.95\linewidth}{\centering Missing file: \texttt{#2}}}%
  }%
}

\begin{document}

\title{Certificate-Aware Property-Directed Reachability \thanks{This is an extended version of the paper published at the Formal Methods in Computer-Aided Design (FMCAD 2026) conference.}}
\author{
  \IEEEauthorblockN{
    Arman Ferdowsi\textsuperscript{1} and
    Laura Kovács\textsuperscript{2}}\\[0.25em]
  \IEEEauthorblockA{\textsuperscript{1}Faculty of Computer Science, University of Vienna \quad
  \textsuperscript{2}Faculty of Informatics, TU Wien\\
  Email: arman.ferdowsi@univie.ac.at, laura.kovacs@tuwien.ac.at}
}

\maketitle

\begin{abstract}
Property-Directed Reachability (\pdr/\icthree) is widely used for hardware safety verification.
Most implementations optimize time-to-answer, but certified workflows also require compact, inexpensive-to-check, and reproducible certificates.
We introduce \tool, whose ranker is fitted offline on a disjoint development corpus, frozen, and used only to order \pdr-generated proposals on unseen instances to target solving time, certificate size, and checker time.
The ranker remains outside the trusted computing base: it cannot bypass the SAT guards on blocker insertion and pushing; obligation scheduling remains fair, and every SAFE/UNSAFE claim requires independent checker acceptance.
\tool also defines certificate-centric metrics and replay logs for artifact-grade reproducibility.
On the 2024 Hardware Model Checking Competition bit-level safety benchmarks, \tool solves six more instances than the same \pdr engine with ranking disabled.
Across each configuration's checker-accepted solved set, the medians of the certificate-size proxy and checker time decrease by 24.6\% and 49\%, respectively.
After the same post-hoc minimizer is applied to paired SAFE outputs, \tool certificates still contain 16.3\% fewer literals and have 38.5\% lower checker time.
\end{abstract}
\begin{IEEEkeywords}
Hardware model checking, property-directed reachability, proof-carrying verification, certificate checking, learning-guided heuristics
\end{IEEEkeywords}

\section{Introduction}
Property-Directed Reachability (\pdr), also known as \icthree, is one of the most effective SAT-based techniques for proving safety properties of finite-state hardware transition systems~\cite{Bradley2011IC3}.
A key attraction of \pdr is that it naturally produces \emph{independently checkable evidence} (\emph{certificate}):
in the SAFE case, an inductive invariant; in the UNSAFE case, a finite counterexample trace.
Historically, however, this evidence was largely treated as a secondary artifact, while implementations and benchmarks focused on wall-clock time and solved instances.

This emphasis is shifting. The \emph{hardware model checking competition} (HWMCC) introduced mandatory certification for its bit-level safety track in 2024 and continued the workflow in 2025~\cite{FroleyksYuPreinerBiereHeljanko2025CertificatesHWMCC,BiereFroleyksPreiner2025HWMCC}. In essence, certificate-based validation is now a standard mechanism for increasing the trust, reproducibility, and interoperability of verification results, e.g., in software verification~\cite{BeyerStrejcek2025SVComp}. This reflects an underlying principle in automated reasoning: heuristic search need not be trusted when correctness is discharged by an independent checker that validates the emitted certificate. As a consequence, certificates become persistent workflow artifacts rather than merely internal byproducts. They are archived, rechecked, compared across versions, and reused downstream. Consequently, reducing their size and checking cost is crucial, as it significantly lowers workflow overhead.

Once certificates become first-class artifacts, \emph{quality} matters.
Two correct SAFE certificates for the same system may differ by orders of magnitude in size and checking cost.
Oversized invariants slow down or even prevent independent checking. They also complicate reuse in incremental verification and downstream reasoning.
Moreover, modern \pdr implementations are heuristic-driven and may be nondeterministic, due for example to randomized SAT heuristics, clause database effects, and parallelism. As such, \pdr may lead to certificates that vary substantially across runs.
This instability complicates debugging, hinders regression bisection, and reduces confidence in certified workflows.

At the same time, the community has begun exploring machine learning to guide verification,
including learning-guided \pdr/\icthree variants~\cite{HuTangYuZhangZhang2024DeepIC3} and
learning-based algorithm selection for hardware model checking~\cite{LuChienLeeGurfinkelGanesh2025Btor2Select}.
These directions raise a practical requirement: learning should improve end-to-end cost \emph{without} enlarging the trusted computing base.

In this work, we advocate \emph{certificate-aware \pdr}, in short \tool, without changing what must be trusted.
\tool is neither a new certificate format nor a new checker.
It separates \pdr-generated proposals, learned ranking, ordinary \pdr control, and independent validation.
The policy is trained offline on prior instances, then frozen and reused without adaptation across unseen target instances, where it only orders current proposals.
This cross-instance deployment lets the one-time fitting cost be amortized over a recurring stream of related bit-level verification tasks.
Transfer quality is an empirical performance question rather than a soundness assumption.
A poorly transferred policy may waste search effort, but it cannot bypass guards, remove the exact blocker fallback, alter certificate semantics, or cause an unchecked claim.

\noindent
\textbf{Our contributions} are summarized below.  
\begin{itemize}
\item \emph{Certificate-aware objective.} We define a run-level cost model over runtime, certificate size, and independent checker time, turning PDR into a controllable optimization process (\cref{subsec:separation}--\cref{subsec:metrics-objective}).
\item \emph{A soundness-transparent learning interface.} We formalize ranking events, development-time pairwise preferences induced by the run-level objective, and frozen deployment. The policy only orders verifier-generated proposals, while ordinary \pdr guards and the independent checker preserve correctness (\cref{subsec:pdr-guards}--\cref{subsec:policy-interface}).
\item \emph{Replay logs for reproducibility.} We propose a compact replay log that records nondeterministic choices together with the solver-derived artifacts needed to determinize re-execution (\cref{subsec:policy-interface}--\cref{subsec:capdr-alg}).
\item \emph{Soundness} of \tool is established, in a policy-independent manner (\cref{sec:soundness}).
    \item \emph{Experimental evaluation under a disjoint development/evaluation protocol.} On certified hardware benchmarks, frozen-policy evaluation on held-out instances reduces certificate size and checker time while improving solved coverage. We additionally apply one shared implementation of the post-hoc minimal safe inductive subset (MSIS) method of~\cite{IvriiGurfinkelBelov2014SmallInvariants} to paired baseline and \tool SAFE invariants, showing that the online-search advantage survives equal postprocessing (\cref{sec:experiments}).
\end{itemize}

\noindent
\emph{Paper organization:}
\cref{sec:RW} reviews related work. \cref{sec:background} recalls PDR and certificate checking. \cref{sec:capdr} presents \tool. \cref{sec:soundness} proves checker-relative soundness, and \cref{sec:experiments} evaluates the approach.

\section{Related Work}
\label{sec:RW}
\subsection{Certification in hardware model checking}
Certification in hardware model checking predates HWMCC'24, including witness-circuit certification and stratified certification for k-induction~\cite{YuBiereHeljanko2021Progress,YuFroleyksBiereHeljanko2022Stratified}. HWMCC'24 made certificates mandatory for the bit-level safety track~\cite{FroleyksYuPreinerBiereHeljanko2025CertificatesHWMCC}, and HWMCC'25 continued this workflow while requiring certificates for satisfiable bit-level liveness results~\cite{BiereFroleyksPreiner2025HWMCC}. Certification is not limited to the core model checking algorithm: certifying preprocessing has been developed for phase abstraction~\cite{FroleyksYuBiereHeljanko2024PhaseAbstraction}, and compositional certification frameworks show how to combine certificates across decomposed problems~\cite{YuFroleyksBiereHeljanko2023CompositionalCertification}. \tool is orthogonal to these efforts: rather than proposing a new certificate format, we make the \emph{heuristic search} inside \pdr certificate-aware and explicitly optimize for downstream checking cost.

A closely related direction studies how to extract, certify, or minimize invariants once they have been found, including minimal safe inductive subsets and proof-certificate extraction for model checkers~\cite{IvriiGurfinkelBelov2014SmallInvariants,MebsoutTinelli2016ProofCertificates,GriggioRoveriTonetta2021CertifyingSAT}. Ivrii et al.~\cite{IvriiGurfinkelBelov2014SmallInvariants} start from an already obtained conjunctive normal form (CNF) invariant and search for a subset-minimal safe inductive subset. In contrast, \tool changes the online \pdr search, so it can discover different lemmas and reach a different fixed point before any post-fixpoint reduction is applied. The two approaches are therefore orthogonal rather than substitutes. We evaluate both a lightweight checker-preserving greedy pass (\textsc{inv-min}) and a fair paired comparison in which the same combined MSIS implementation of~\cite{IvriiGurfinkelBelov2014SmallInvariants} is applied to baseline and \tool invariants.
\subsection{Strengthening and optimizing \pdr}
Since the first version of \icthree~\cite{Bradley2011IC3}, performance improvements have been made  both at the algorithm layer and at the SAT-solver interface.
For example, removing activation literals from single-clause assumptions can speed up \pdr-style workloads~\cite{FroleyksBiere2021SingleClauseAssumption}.
Recent work has also investigated lemma-quality criteria and strategies that improve the ability to push and reuse learned information, e.g., via searching for \emph{i-good} lemmas~\cite{XiaBecchiCimattiGriggioLiPu2023iGood}.
More recently, extended-resolution reasoning has been integrated into \pdr to improve proof power~\cite{LukaVizel2025PdrER}, and dedicated SAT-solver optimizations tailored to \icthree workloads have been shown to yield substantial speedups~\cite{SuYangCiLiBuHuang2025DeeplyOptimizing}. Another relevant line augments \icthree with internal signals, allowing invariants
over an enlarged variable vocabulary when no concise latch-only invariant exists~\cite{dureja2021ic3}.
This line is particularly relevant to our optional auxiliary-variable modes:
unlike such approaches, \tool\ makes the checker-facing certificate interface explicit by
requiring either an $X$-only certificate accepted by the base checker or an extended
certificate accompanied by a conservative auxiliary-state relation. \tool can incorporate such techniques as optional guarded actions, enabling controlled trade-offs between solver runtime and certificate-centric metrics.

\subsection{Learning-guided verification}
DeepIC3~\cite{HuTangYuZhangZhang2024DeepIC3}, learned lemma prediction~\cite{SuYangCi2024PredictingLemmas}, LeGend~\cite{miao2026legend}, and A-IC3~\cite{ZhouHuZhangZhang2026AIC3} use learning to improve \pdr/\icthree search. Btor2-Select~\cite{LuChienLeeGurfinkelGanesh2025Btor2Select} applies learning at the algorithm-selection level, while IC3-Evolve~\cite{miao2026ic3} evolves solver heuristics offline and admits results through independently checked proofs or witnesses. The cited evaluations principally target solving performance, lemma prediction, or strategy selection. They do not report the same combination of checker-accepted CNF-invariant size, independently measured checker time, and replay fidelity under a common certificate interface. A-IC3 also updates a bandit policy from online feedback, whereas \tool fits its ranker only on the development corpus and freezes it for deployment.

Our headline baseline is our implementation of Bradley's \pdr algorithm~\cite{Bradley2011IC3}. Within the paired experiments, baseline \pdr and \tool share the solver-facing core, preprocessing, resource limits, certificate emission, and checker. They differ only in the order of verifier-generated choices, together with the separately labeled optional post-fixpoint pass. This controlled internal comparison isolates the effect of certificate-aware guidance within our engine. We do not claim that the baseline is the strongest available HWMCC \pdr implementation. A total-tool comparison with learning-guided systems remains informative for coverage, but it cannot isolate ranking because preprocessing, SAT tuning, lemma synthesis, invariant emission, and checking infrastructure also differ.

\subsection{Proof formats and trustworthy checking}
Trustworthy reasoning has a long history in SAT and satisfiability modulo theories (SMT).
DRAT proofs and the DRAT-trim checker~\cite{WetzlerHeuleHunt2014DRATTrim} provided an expressive and practical proof format for SAT,
and LRAT adds hints that enable efficient checking in trusted systems~\cite{CruzFilipeHeuleHuntKaufmannSchneiderKamp2017LRAT}.
Efficient LRAT checking integrated with modern SAT solvers is demonstrated in~\cite{PollittFleuryBiere2023FasterLRAT}, and formally verified checking has also been developed~\cite{Lammich2024FastVerifiedUNSAT}.

Recent work extends clausal-proof certification to Lean, importantly via a verified Lean checker~\cite{codel2024verified} for substitution-redundancy proofs; this yields a clausal proof system generalizing propagation redundancy (PR) and resolution asymmetric tautology (RAT).
For SMT, Alethe~\cite{SchurrFleuryBarbosaFontaine2021Alethe} and tools like Carcara~\cite{AndreottiLachnittBarbosa2023Carcara}, as well as proof reconstruction into Isabelle/HOL~\cite{LachnittEtAl2025IsabelleSMTReconstruction} and Lean~\cite{mohamed2025lean}, show how solver outputs can be independently validated.
Extending DRAT-style proofs to SMT further highlights the broader relevance of proof logging~\cite{HitarthCodelLachnittDutertre2024eDRAT}.
\tool complements these efforts by focusing on certifying \emph{model checking outcomes} (invariants/traces) and by making the search procedure certificate-aware while keeping the correctness argument entirely checker-based.

\section{Background}
\label{sec:background}
We briefly recall the transition-system model, the SAFE/UNSAFE certificate view adopted in this paper, and the \pdr/\icthree and certificate notions used later.

\subsection{Transition systems and safety}
We consider finite-state transition systems over Boolean state variables $X$ (latches)
and Boolean input variables $U$.
A system is specified by a triple $(\Init(X), \Trans(X,U,X'), \Prop(X))$, where
$\Init$ characterizes initial states, $\Trans$ is the transition relation,
and $\Prop$ is a safety property that must hold in all reachable states.

\begin{dfn}[Reachable state]
A state $x$ is \emph{reachable} if there exists a finite sequence of inputs $u_0,\dots,u_{k-1}$ and states
$x_0,\dots,x_k$ such that $\Init(x_0)$, $\Trans(x_i,u_i,x_{i+1})$ for all $i<k$, and $x_k=x$. We denote the set of reachable states by $\Reach$.
\end{dfn}

\begin{dfn}[Safety]
The system is \emph{safe} if and only if every reachable state
satisfies the safety property. Formally, $\forall x.\;\bigl(x\in\Reach \Rightarrow \Prop(x)\bigr)$.
\end{dfn}

\subsection{Certificates}
\label{subsec:certificates}
A standard way to certify safety-verification outcomes is via explicit, independently checkable artifacts: an \emph{inductive invariant} in the SAFE case and a \emph{counterexample} trace in the UNSAFE case.

\begin{dfn}[Safe certificate]
A \emph{safe certificate} is a Boolean formula $\Inv(X)$ over the state variables satisfying:
\begin{align}
  \Init(X) &\Rightarrow \Inv(X) \label{eq:init} \\
  \Inv(X) \wedge \Trans(X,U,X') &\Rightarrow \Inv(X') \label{eq:step} \\
  \Inv(X) &\Rightarrow \Prop(X) \label{eq:prop}
\end{align}
\end{dfn}

In \emph{base mode}, where the certificate uses only the original state variables \(X\), checking~\cref{eq:init}-\cref{eq:prop} reduces to three
SAT queries (unsatisfiability checks), and is typically much cheaper than
re-running model checking from scratch.
Competition-grade certification frameworks often use more structured encodings of invariants
(e.g., witness circuits) and establish the certificate's checkability and semantics
in a dedicated checker~\cite{FroleyksYuPreinerBiereHeljanko2025CertificatesHWMCC}.

\begin{dfn}[Unsafe certificate]
\label{def:UnsafeCer}
An \emph{unsafe certificate} is a finite trace $x_0,\dots,x_k$ with inputs $u_0,\dots,u_{k-1}$
such that $\Init(x_0)$, each step satisfies $\Trans(x_i,u_i,x_{i+1})$, and $\neg \Prop(x_k)$.
This can be checked by simulation or by SAT.
\end{dfn}

\subsection{Property-Directed Reachability (PDR/\icthree)}
\pdr incrementally constructs a sequence of clause sets (``frames'') $F_0, F_1, \dots$ such that:
(i) $F_0$ is initialized from $\Init$, (ii) each $F_i$ over-approximates the states reachable within $i$ steps,
and (iii) the sequence is monotone ($F_i \Rightarrow F_{i+1}$).
The \pdr algorithm searches for \emph{counterexamples to induction} (CTIs) at successive frames and blocks them by learning clauses that exclude the offending state while remaining inductive relative to the previous frame.
Clauses that satisfy the corresponding push condition can be propagated forward, strengthening later frames.
When, after blocking and pushing, two consecutive frames coincide, their common clause set is an inductive invariant and therefore a SAFE certificate.
If blocking reaches the initial frame, the predecessor chain yields a concrete counterexample trace; see~\cite{Bradley2011IC3,Bradley2012UnderstandingIC3}.

Modern \pdr implementations differ primarily in how they perform generalization,
how they order proof obligations, which are pending blocking tasks, and how they manage incremental SAT queries; a classic implementation-oriented treatment is~\cite{EenMishchenkoBrayton2011EfficientPDR}.

\section{Certificate-Aware PDR}
\label{sec:capdr}
In this section, we present \tool as a certificate-aware variant of \pdr/\icthree that, by adding an offline learning ranker, jointly targets time-to-answer, certificate size, and checker cost without adding the learned ranker to the trusted computing base.
It ranks PDR-generated choices at four choice points (CPs), namely blocker generalization (CP1), obligation ordering (CP2), clause pushing (CP3), and optional guarded extensions (CP4), defined with their control conditions in \cref{subsec:pdr-guards,subsec:choice-points}.
\cref{fig:capdr-overview} shows the division between learned ranking, guarded \pdr control, and independent checking.

\begin{figure*}[t]
  \centering
  \includegraphics[width=0.99\textwidth]{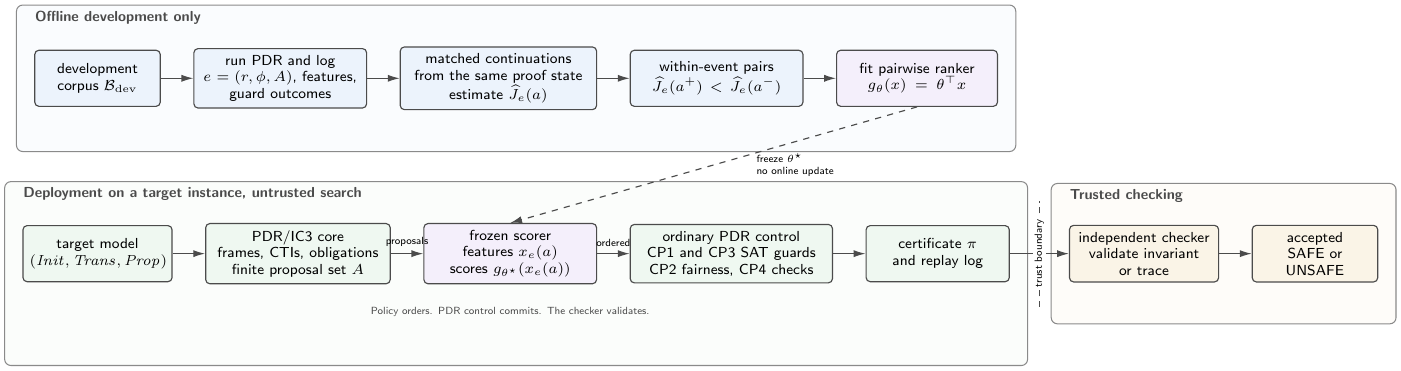}
  \caption{\footnotesize \tool overview and trust boundary. The ranker is fitted offline on the development corpus, frozen once, and reused without adaptation on unseen target instances. It only orders verifier-generated proposals. CP1 and CP3 remain SAT-guarded, CP2 remains subject to fairness, CP4 retains conservative validation, and the independent checker validates the final artifact before SAFE/UNSAFE is reported.}
  \label{fig:capdr-overview}
\end{figure*}

Offline \emph{development} uses guarded continuations and the run-level objective over solving time, certificate size, and checker cost to derive pairwise preferences and fit a ranker on prior instances.
Replay metrics measure reproducibility
(\cref{subsec:metrics-objective,subsec:policy-interface}).
The fitted ranker is then frozen and reused unchanged on unseen targets.

During \emph{deployment}, at each choice point the verifier generates a finite proposal set, the frozen ranker orders it, and ordinary \pdr control governs commitment.
CP1 and CP3 change frames only after SAT guards succeed, CP2 selects the next obligation subject to fairness, and CP4 retains conservative validation.
Deployment performs only feature extraction, scoring, and ordering, with no online objective evaluation, rollout, or per-proposal checker call.
SAFE/UNSAFE is reported only after checker acceptance.

%

\subsection{Separation and trust boundary}
\label{subsec:separation}
In \tool, learning may affect search but never the correctness argument. The \emph{trusted} computing base consists only of a standalone certificate checker: its parsing layer, the SAT engine used for SAFE certificates, the trace-evaluation/simulation routine used for UNSAFE certificates, and any validation routine for the auxiliary-variable extension mode defined in \cref{subsec:cert-semantics}. Everything else, including the possibly nondeterministic certificate-producing verifier, the learned policy, feature extraction, training, and hyperparameter tuning, is \emph{untrusted}.

Accordingly, all correctness claims are conditioned solely on checker acceptance. The policy may rank only alternatives already admitted by the underlying algorithm, and any state-changing action, such as learning or pushing a clause, is committed only after the corresponding \pdr-style satisfiability guard succeeds, or its conservative auxiliary-variable variant when auxiliaries are present. Thus, learning can change the search order and which valid lemmas are found, but it cannot by itself make an invalid certificate cross the trust boundary.

\subsection{Formal certificate semantics}
\label{subsec:cert-semantics}
In base mode, a safe certificate is emitted as a CNF invariant $\Inv(X) \;\equiv\; \bigwedge_{C \in \Inv} C(X)$,
where each $C$ is a clause (a disjunction of literals over $X$).
Its semantics is exactly that of \cref{eq:init,eq:step,eq:prop}. An unsafe certificate is emitted as a finite trace $
  \pi \;=\; (x_0,u_0,x_1,\dots,u_{k-1},x_k)$
with $x_i \in \{0,1\}^{|X|}$ and $u_i \in \{0,1\}^{|U|}$.
Its semantics is exactly that of \cref{def:UnsafeCer}.

Optional strengthening may introduce auxiliary state bits $Y$. To preserve soundness for the original system over $X$, \tool supports either \emph{projection mode}, which emits an $X$-only candidate invariant later validated by the base SAFE checker, or \emph{extension mode}, which emits $(\Def,\Inv(X,Y))$. Writing $\Def'(X',Y')$ for the primed copy of $\Def(X,Y)$, define
\begin{align}
\Init^{\mathsf{ext}} &\triangleq \Init(X)\wedge \Def(X,Y), \notag\\
\Trans^{\mathsf{ext}} &\triangleq \Trans(X,U,X')\wedge \Def(X,Y)
\wedge \Def'(X',Y'). \notag
\end{align}
We require $\Def$ to be conservative under projection to $X$:
\begin{align}
\forall x.\quad \Init(x) &\Rightarrow \exists y.\;\Def(x,y),
\label{eq:defext-init}\\
\forall x,u,x',y.\;
\Trans(x,u,x')\wedge \Def(x,y)
&\Rightarrow \exists y'.\;\Def(x',y').
\label{eq:defext-step}
\end{align}

In extension mode, the checker validates both $\Def$ and $\Inv$ on the extended system. Acceptance then implies safety of the original system (\cref{prop:defext-sound}). Since the experiments in \cref{sec:experiments} use base mode, here we restrict our attention to the $X$-only case. More details about auxiliary extensions are deferred to Appendix~\ref{app:aux-extension}.

\subsection{PDR core invariants and guard checks}
\label{subsec:pdr-guards}
To specify the learning interface precisely, we restate the frame invariants and SAT guard checks that determine when a proposed state-changing action may be committed in \pdr.
Recall that a frame $F_i$ is a CNF over $X$.
At any time, the engine maintains a \emph{finite} prefix of frames
$F_0,\dots,F_{k+1}$ for a current depth $k\ge 0$. We call level \(k\) the current frontier. New frames are created lazily
and initialized to $\Prop$. For the algorithmic presentation, we assume that $\Prop$ has been fixed once in CNF (or via a fixed clausal encoding shared by the verifier and the checker); by abuse of notation, we write $\Prop$ for that clausal form.

Since safety properties may be violated already in the initial states,
\tool first checks whether $\Init(X)\wedge \neg\Prop(X)$ is satisfiable.
If it is, the system is immediately \textsc{unsafe} with a length-$0$ trace,
and the run terminates with an UNSAFE certificate accepted by the independent checker.
Hence, for the remainder we may assume $\Init \Rightarrow \Prop$.

We maintain the standard \pdr frame invariants for all currently allocated indices:
\begin{align}
  &F_0 \equiv \Init \label{eq:frame0} \\
  &F_i \Rightarrow F_{i+1} \quad \forall \ 0\le i \le k \label{eq:framemon} \\
  &F_i \Rightarrow \Prop \quad \forall \ 1\le i \le k{+}1 \label{eq:frameprop} \\
  &F_i(X) \wedge \Trans(X,U,X') \Rightarrow F_{i+1}(X') \quad \forall \ 0\le i < k.
    \label{eq:framestep}
\end{align}
\cref{eq:framestep} holds at every \emph{closed} level $i<k$, meaning that consecution from \(F_i\) to \(F_{i+1}\) has been established.
At the frontier level \(k\), ruling out frontier CTIs establishes
\(F_k(X)\wedge \Trans(X,U,X') \Rightarrow \Prop(X')\), and any strengthening
of \(F_{k+1}\) is justified either by a blocker check relative to \(F_k\)
or by a successful push check.
Hence, when after the push phase $F_k \equiv F_{k+1}$, the common frame is inductive and thus a SAFE certificate.

\begin{dfn}[SAT encodings]
All implications are checked via unsatisfiability: $A \Rightarrow B \ \text{iff} \ A \wedge \neg B \text{ is UNSAT}$.
We define and write $\mathsf{UNSAT}(\Phi)$ and $\mathsf{SAT}(\Phi)$ for \emph{(un)satisfiability
queries} to the underlying SAT backend on a fixed equisatisfiable clausal
encoding of $\Phi$; when $\Phi$ is already in CNF, this is just $\Phi$ itself.
\end{dfn}

\begin{dfn}[State-cube convention]
A \emph{state cube} is a complete \emph{cube} over \(X\), i.e., a conjunction containing exactly one literal for each variable in \(X\).
Such a cube represents exactly one concrete state.
For a state cube \(d\), its primed copy \(d(X')\) is the same conjunction over \(X'\), constraining the next-state variables to that state, and \(\neg d\) is the clause excluding that state.
In the obligation mechanism below, \(d\) and \(d_{\mathsf{pred}}\) always denote state cubes extracted from SAT models.
\end{dfn}

A \emph{proof obligation} is a pair \((i,d)\) where
\(i \ge 0\) and \(d\) is a state cube over \(X\) representing one concrete target state. Equivalently, \(d(X')\) fixes the next-state variables to that state. Each non-root obligation also stores the successor cube and input valuation by which it was generated. When an initial cube is reached, following these links yields a concrete suffix ending in a bad state.

To find a bad successor at depth $i$, the engine checks
\begin{equation}
  \mathsf{SAT}\Bigl(F_i(X) \wedge \Trans(X,U,X') \wedge \neg\Prop(X')\Bigr).
  \label{eq:cti-query}
\end{equation}
If \cref{eq:cti-query} is satisfiable, a model yields a current-state valuation on \(X\) and a bad successor valuation on \(X'\).
The engine unprimes the \(X'\)-literals to obtain the bad-successor state cube \(d\) over \(X\) and creates the obligation \((i{+}1,d)\).

If \(i=0\), blocking has reached the initial frame and the stored
obligation chain is emitted as a candidate UNSAFE trace. More generally,
before learning any blocker for an obligation \((i,d)\), the engine
checks whether $\mathsf{SAT}\bigl(\Init(X)\wedge d(X)\bigr)$.
If this query is SAT, then, because \(d\) is a complete cube, the target
state represented by \(d\) is an initial state. The stored successor
links therefore yield a candidate UNSAFE trace, and no blocker is learned
for \(d\). Only when \(i>0\) and
\(\mathsf{UNSAT}(\Init(X)\wedge d(X))\) does \pdr attempt to block
\((i,d)\) by the predecessor query:
\begin{equation}
  \mathsf{SAT}\Bigl(F_{i-1}(X) \wedge \Trans(X,U,X') \wedge d(X')\Bigr).
  \label{eq:pred-query}
\end{equation}
If \cref{eq:pred-query} is SAT, the model yields a concrete predecessor
valuation on \(X\) together with a witness input valuation on \(U\); let
\(d_{\mathsf{pred}}\) be the corresponding predecessor state cube over
\(X\). We store the successor link
\(d_{\mathsf{pred}}\xrightarrow{u} d\), requeue \((i,d)\), and add
\((i{-}1,d_{\mathsf{pred}})\).
If \cref{eq:pred-query} is UNSAT, then no state satisfying \(F_{i-1}\)
can reach the target cube \(d\) in one step. The fallback
(\emph{base}) PDR blocker is
\(C_0\triangleq \neg d\), the clause excluding exactly the concrete state
represented by \(d\). More generally, \tool may try a generalized blocker,
but any clause \(C(X)\) actually inserted for obligation \((i,d)\) must
satisfy the same guard conditions:
\begin{align}
  d(X) &\Rightarrow \neg C(X) \label{eq:block} \\
  \Init(X) &\Rightarrow C(X) \label{eq:init-block} \\
  F_{i-1}(X) \wedge \Trans(X,U,X') &\Rightarrow C(X').
    \label{eq:relind}
\end{align}
These conditions express target blocking, initiation, and relative inductiveness, respectively.
Condition~\eqref{eq:init-block} is equivalent to
\(\mathsf{UNSAT}(\Init(X)\wedge \neg C(X))\), and
condition~\eqref{eq:relind} is equivalent to $\mathsf{UNSAT}\bigl(
    F_{i-1}(X)\wedge \Trans(X,U,X')\wedge \neg C(X')
  \bigr)$.

\medskip
\noindent
\emph{Clause insertion:}
When \cref{eq:block,eq:init-block,eq:relind} hold, the verifier may conjoin
\(C\) to \(F_1,\dots,F_i\), never to \(F_0\). This preserves monotonicity,
closed-level consecution, and \(F_j\Rightarrow\Prop\) by the standard guarded
insertion argument given in Appendix~\ref{app:frame-preservation}. Thus the
policy only chooses which candidate \(C\) to try; insertion itself remains
SAT-guarded.

\subsection{Learning-guided choice points}
\label{subsec:choice-points}
The guarded control above leaves four choice points, at which the verifier must order a finite set of candidates.
Candidate-set membership asserts only syntactic and algorithmic admissibility. It does not mean that a clause insertion or push has passed its commitment guard. The policy supplies only an order.

\noindent
\emph{\textbf{CP1} Learning-guided blocker generalization.}
After an obligation \((i,d)\) passes the initial-state check and its predecessor query is UNSAT, the exact fallback blocker is \(C_0\triangleq\neg d\).
The verifier may also construct generalized blockers
\[
  \mathcal{C}(d,i)\subseteq
  \{\neg d^\star\mid d^\star\subseteq d\},
  \qquad C_0\in\mathcal{C}(d,i),
\]
where \(d^\star\) is obtained by dropping literals from the complete target cube \(d\).
Because \(d\Rightarrow d^\star\), every candidate \(C=\neg d^\star\) automatically satisfies the blocking condition \cref{eq:block}.
A smaller subcube yields a shorter and logically stronger clause, but it may require more SAT effort or fail initiation or relative inductiveness.
The implementation seeds the finite set with an UNSAT-core-derived subcube when available and explores additional subcubes by deterministic budgeted literal deletion.
The learned policy orders this set, and the verifier commits the first candidate that satisfies \cref{eq:init-block,eq:relind}.
If all proper generalizations fail, \(C_0\) remains available.

As an illustrative example, consider why learned ranking is not reducible to shortest-first ordering. Let the target cube be $d=x_1\wedge\neg x_2\wedge x_3$.
The verifier-generated menu may contain the one-literal clause $C_{\mathsf{s}}=\neg x_3$, the two-literal clause $C_1=x_2\vee\neg x_3$,
and the exact fallback $C_0=\neg d=\neg x_1\vee x_2\vee\neg x_3$.
All three block the target cube, and $C_{\mathsf{s}}\Rightarrow C_1\Rightarrow C_0$.
Thus, ordering candidates by increasing length always tests the strongest generalization \(C_{\mathsf{s}}\) first, although it may fail initiation or relative inductiveness in the current frame. \tool may instead test \(C_1\) first when the available context indicates that it is more likely to pass the guards. Such context includes frame information, recency, blocker and push history, and SAT-side statistics. Even when a shorter candidate passes, length alone does not determine subsequent obligation generation, push success, convergence, or final checker time. A candidate is inserted only after
\[
  \mathsf{UNSAT}(\Init\wedge\neg C),
  \qquad
  \mathsf{UNSAT}(F_{i-1}\wedge\Trans\wedge\neg C')
\]
succeed. If every proper generalization fails, \(C_0\) remains available by \cref{prop:base-blocker}.

Clause length is one CP1 feature, not a replacement for the policy. CP2 ranks proof obligations and has no clause-length notion. A CP3 push opportunity contains a clause, but its usefulness also depends on the source and target frames, previous push outcomes, and the effect on convergence. Hence no single shortest-formula rule covers the three learned choice points.

\begin{restatable}[Base blocker is admissible after the initial-state check]{prp}{BaseBlockerAdmissible}
\label{prop:base-blocker}
If \(i>0\), \(\mathsf{UNSAT}(\Init(X)\wedge d(X))\), and the predecessor
query~\cref{eq:pred-query} for an obligation \((i,d)\) is UNSAT, then
the base clause \(C_0 \triangleq \neg d\) satisfies
\cref{eq:block,eq:init-block,eq:relind}.
\end{restatable}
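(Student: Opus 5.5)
The plan is to verify the three guard conditions \cref{eq:block,eq:init-block,eq:relind} for \(C_0=\neg d\) one at a time. Each follows almost directly from one hypothesis of the statement together with the state-cube convention, under which \(d\) is a complete cube over \(X\) denoting a single concrete state. I will argue semantically, reading each implication as a statement about all assignments. I then transfer this to the \(\mathsf{UNSAT}\) queries via the SAT-encodings definition: a query is UNSAT exactly when the encoded formula has no model.

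\emph{Blocking \cref{eq:block}.} Here \(\neg C_0 = \neg\neg d \equiv d\), so \(d(X)\Rightarrow\neg C_0(X)\) is the tautology \(d\Rightarrow d\). This uses no hypothesis. It is also the instance \(d^\star=d\) of the CP1 remark that every \(\neg d^\star\) with \(d^\star\subseteq d\) blocks \(d\).

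\emph{Initiation \cref{eq:init-block}.} This condition is equivalent to \(\mathsf{UNSAT}(\Init(X)\wedge\neg C_0(X))\). Since \(\neg C_0\equiv d\), this is exactly \(\mathsf{UNSAT}(\Init(X)\wedge d(X))\), which is a hypothesis. The one point to check is that the fixed equisatisfiable encoding used for \(\Init\wedge\neg C_0\) has the same satisfiability status as the one used for \(\Init\wedge d\). This holds because both formulas are logically equivalent and any equisatisfiable encoding preserves satisfiability.

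\emph{Relative inductiveness \cref{eq:relind}.} This condition is equivalent to \(\mathsf{UNSAT}(F_{i-1}(X)\wedge\Trans(X,U,X')\wedge\neg C_0(X'))\). Priming commutes with negation, so \(\neg C_0(X')\equiv d(X')\). The formula therefore coincides with the predecessor query \cref{eq:pred-query}, which is UNSAT by hypothesis. The hypothesis \(i>0\) is needed only so that \(F_{i-1}\) exists and the query is well formed.

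I expect no real obstacle here. The only delicate step is the bookkeeping in the relative-inductiveness case: confirming that the primed copy of \(\neg d\) is the negation of the primed cube \(d(X')\), and that the encodings of the two equivalent queries agree on satisfiability. I will state both facts explicitly rather than leave them implicit.
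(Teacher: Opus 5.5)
Your proposal is correct and follows the paper's proof step for step. Blocking holds because $\neg C_0\equiv d$. Initiation follows from the hypothesis $\mathsf{UNSAT}(\Init(X)\wedge d(X))$. Relative inductiveness follows by identifying $\mathsf{UNSAT}(F_{i-1}(X)\wedge\Trans(X,U,X')\wedge\neg C_0(X'))$ with the predecessor query~\cref{eq:pred-query}. The paper adds one observation that your argument omits but the statement does not require: since $\neg d\in\mathcal{C}(d,i)$, the CP1 blocking loop always accepts some candidate regardless of the policy's ranking.
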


\begin{proof}
Let \(C_0 \triangleq \neg d\).
Since \(d\) is a state cube over \(X\), we have
\(d(X) \Rightarrow \neg C_0(X)\), so \(C_0\) satisfies the blocking
condition~\cref{eq:block}.

Since \(\mathsf{UNSAT}(\Init(X)\wedge d(X))\), we have
\(\Init(X)\Rightarrow \neg d(X)\). Because \(C_0=\neg d\), this is
exactly the initiation condition~\cref{eq:init-block}.
For the relative-inductiveness guard, consider
\[
\mathsf{UNSAT}\bigl(F_{i-1}(X)\wedge \Trans(X,U,X') \wedge \neg C_0(X')\bigr).
\]
Because \(C_0(X')=\neg d(X')\), this is equivalent to
\[
\mathsf{UNSAT}\bigl(F_{i-1}(X)\wedge \Trans(X,U,X') \wedge d(X')\bigr),
\]
which is exactly the predecessor query~\cref{eq:pred-query}, assumed
UNSAT. Hence \(C_0\) satisfies~\cref{eq:relind}.

Finally, \(\neg d\in\mathcal{C}(d,i)\) by construction, and the blocking
loop scans the finite ordered candidate list until it finds a candidate
satisfying the initiation and relative-inductiveness guards. Since
\(C_0\) satisfies both, the loop must accept some candidate regardless of
the policy ranking.
\end{proof}

\noindent
\emph{\textbf{CP2} Obligation ordering.}
Here the candidate set comprises the pending obligations currently eligible for processing, which we call ready obligations.
The policy selects which obligation \((i,d)\) is processed next.
This scheduling decision does not by itself modify any frame, although it can change which predecessors and blockers are discovered later.
A fairness rule such as bounded starvation ensures that a low-scored obligation cannot be postponed forever.
Useful features include the frame index, recency, requeue count, predecessor-chain depth, and SAT-side history.

\noindent
\emph{\textbf{CP3} Pushing schedule and clause selection.}
For a learned clause \(C\in F_i\) with \(i\geq 1\), a push attempt checks
\begin{equation}
  F_i(X)\wedge\Trans(X,U,X')\Rightarrow C(X'),
  \label{eq:push-check}
\end{equation}
equivalently \(\mathsf{UNSAT}(F_i\wedge\Trans\wedge\neg C')\).
CP3 ranks eligible push attempts and does not synthesize new clauses.
A clause is added to \(F_{i+1}\) only when \cref{eq:push-check} holds.
The order can affect convergence and the fixed-point frame because successful pushes change the frames seen by later scheduling decisions.

\noindent
\emph{\textbf{CP4} Optional guarded extensions.}
Optional strengthening methods may be exposed as policy choices only when their clauses obey the appropriate commitment guards and any auxiliary variables have explicit checker-facing semantics as in \cref{subsec:cert-semantics}.
CP4 is disabled in the base-mode experiments.

Thus every choice point follows the same interface. The verifier generates proposals, the policy orders them, and the appropriate guarded or fair control decides what is committed.

\subsection{Run-level cost metrics and objective}
\label{subsec:metrics-objective}
Given the guard-checked choices above, we evaluate a run by the vector $(t,\mathrm{size}(\pi),t_{\mathsf{chk}})$, where \(t\) is solving time, \(\pi\) is the emitted certificate, and \(t_{\mathsf{chk}}\) is the time taken by an independent checker on the emitted certificate. Since checking cost depends on the checker and SAT engine, we treat \(t_{\mathsf{chk}}\) as the ground-truth downstream cost.
For learning, one may additionally log proxies such as the number of SAT calls and their sizes. Below, we define the certificate-size component \(\mathrm{size}(\pi)\), and then use a scalarization only when a single optimization target is needed for learning or end-to-end reporting.

\begin{dfn}[Certificate-size proxy]
For a CNF invariant \(\Inv\), let $\mathrm{lit}(\Inv) \triangleq \sum_{C\in\Inv} |C|$.
For a trace \(\pi=(x_0,u_0,x_1,\dots,u_{k-1},x_k)\), let $\mathrm{bits}(\pi) \triangleq (k{+}1)\cdot|X| + k\cdot|U|$.
We define
\[
  \mathrm{size}(\pi) \triangleq
  \begin{cases}
    \mathrm{lit}(\Inv), & \text{if } \pi=\Inv \text{ (SAFE)},\\
    \mathrm{bits}(\pi), & \text{if } \pi \text{ is a trace (UNSAFE)}.
  \end{cases}
\]
\end{dfn}

Literal count is a syntactic size proxy for the emitted base-mode CNF invariant, not the sole certificate-quality measure and not a claim of semantic minimality. Equivalent invariants can have different CNF representations. Clause count, the clause-width distribution, the size of the checker encoding, and witness-circuit size are also meaningful. We use literal count because it is defined directly on the emitted CNF and is independent of the checker implementation once that artifact is fixed. The primary downstream measure is nevertheless the independently measured checker time \(t_{\mathsf{chk}}\). The experimental conclusions are supported by both quantities rather than by literal count alone.

\begin{dfn}[Scalarized objective]
When a single scalar is needed, we combine the run-level vector \((t,\mathrm{size}(\pi),t_{\mathsf{chk}})\) as
\begin{equation}
  J \;=\; \alpha \cdot t \;+\; \beta \cdot \mathrm{size}(\pi) \;+\; \gamma \cdot t_{\mathsf{chk}},
  \label{eq:objective-revised}
\end{equation}
with user-chosen weights \(\alpha,\beta,\gamma \ge 0\).
Thus \(J\) is simply a tunable scalarization of solving time, certificate bulk, and downstream checking effort.
\end{dfn}

In the evaluated prototype, reproducibility is not folded into \(J\).
Cross-seed stability and replay fidelity are reported separately in \cref{subsec:policy-interface}.

The proxy \(\mathrm{size}(\pi)\) intentionally mixes two outcome-specific measures:
literal count for SAFE certificates and trace bit count for UNSAFE certificates.
Accordingly, aggregate SAFE+UNSAFE summaries using \(\mathrm{size}(\pi)\) (or a scalarization that includes it)
should be read only as coarse workflow-level bookkeeping across outcome types.
Our main certificate-quality analysis is therefore outcome-homogeneous, focusing on SAFE invariants in
\cref{tab:ablation,fig:tradeoff,fig:checker-cdf,fig:stability}.

We still report the underlying vector metrics \((t,\mathrm{size}(\pi),t_{\mathsf{chk}})\). The local scorer introduced in \cref{subsec:policy-interface} is used only as a surrogate for the action values induced by this global objective.

\subsection{Policy learning, deployment, and reproducibility}
\label{subsec:policy-interface}

\noindent
\emph{Ranking events and features.}
At a choice point CP$_r$, the current proof state \(\phi\) induces a ranking event
\[
  e=(r,\phi,A),
\]
where \(A\) is the finite set of blocker proposals for CP1, ready obligations for CP2, eligible push attempts for CP3, or guarded extension proposals for CP4.
For each \(a\in A\), the verifier computes a deterministic feature vector
\(x_e(a)\triangleq\psi(r,\phi,a)\). In the evaluated prototype, \(\psi\) uses verifier-available attributes including frame index, candidate length where applicable, recency, blocker or push history, and SAT-side statistics. The scorer assigns \(g_\theta(x_e(a))\), with higher values placed earlier in the order.
At CP1 and CP3, selecting \(a\) means trying that proposal first while preserving all guards.
It never means forcing an invalid state change.

\noindent
\emph{Development-time pairwise supervision.}
The objective \(J\) is a delayed run-level quantity, so it cannot be read directly from the local proof state.
Development therefore estimates a continuation cost \(\widehat J_e(a)\) for selected alternatives at the same event.
A continuation begins from the same saved proof state, places \(a\) first, and then follows a fixed continuation policy with the usual guards, fallback rules, and fairness constraints. Matched alternatives use the same remaining time and memory budget, solver configuration and seed, and post-action tie-breaking. The first proposal is the only intentional intervention. The estimate may come from a measured development continuation or from an explicit offline rollout. A pairwise label is formed only when two alternatives from the same event have comparable estimates under this common protocol. Missing or censored continuations are omitted unless they receive the explicit failure penalty defined in Appendix~\ref{app:local-objective}. This within-event construction avoids comparing action costs across unrelated proof states or benchmark instances.
Writing \(a^+\succ_e a^-\) for the preferred alternative, the label is
\[
  a^+\succ_e a^-
  \quad\Longleftrightarrow\quad
  \widehat J_e(a^+)<\widehat J_e(a^-).
\]
For the linear ranker used in \cref{sec:experiments}, let
\(x^+=x_e(a^+)\), \(x^-=x_e(a^-)\), and \(\Delta x=x^+-x^-\).
Then
\begin{align*}
  g_\theta(x)&=\theta^\top x,\\
  \ell_\theta(a^+,a^-)
  &=\log\!\left(1+\exp\!\left[-\theta^\top\Delta x\right]\right).
\end{align*}
Minimizing the sum of these losses with \(L_2\) regularization makes a lower-cost alternative receive a higher score.
The complete objective and failure-penalty convention are given in Appendix~\ref{app:local-objective}.

\noindent
\emph{Frozen deployment.}
After fitting, the learned parameter vector \(\theta\) is frozen and reused unchanged across target instances.
For every event on a target instance, \tool computes the current features, evaluates the scorer, and applies fixed tie-breaking to obtain a deterministic total order.
It does not compute \(J\) or \(\widehat J_e\).
It performs no rollout, parameter update, or per-candidate checker call.
CP1 and CP3 scan the score order until a guard-valid proposal is found.
CP2 selects the highest-ranked ready obligation subject to fairness.
CP4 remains subject to its conservative validation conditions.
The final checker is invoked only on the emitted invariant or trace.

\noindent
\emph{Applicability and transfer.}
The features describe local \pdr contexts and candidates rather than benchmark identities, so the frozen ranker can be reused on previously unseen systems and its one-time fitting cost can be amortized over repeated certified verification.
Effective transfer is plausible when development and target runs induce similar local decision distributions, but it is not guaranteed. Nevertheless, although distribution shift may reduce efficiency or certificate quality, soundness is not affected because guards, fairness, the exact CP1 fallback, and checker acceptance remain unchanged.

\noindent
\emph{Replay log.}
To support artifact-grade reproducibility, the engine records tool and SAT-solver versions, configurations, seeds, solver-derived objects that affect control flow, the attempted actions and guard outcomes, and hashes of the relevant contexts.
A replay reuses the recorded objects directly, or checks them against a determinized recomputation.
A hash or guard-outcome mismatch signals divergence caused by nondeterminism or implementation drift.

\noindent
\emph{Stability and replay-fidelity metrics.}
For SAFE certificates, let \(\mathsf{canon}(\Inv)\) denote the clause set obtained by sorting literals within each clause and deleting duplicates.
We measure cross-seed instability by the Jaccard distance between canonicalized clause sets from two seeds, and replay fidelity by the same distance between an original run and its deterministic replay.
Lower is better, and exact replay gives \(\Delta_{\Inv}^{\mathsf{replay}}(s)=0\).
The exact formulas are given in Appendix~\ref{app:replay-metrics}.

\subsection{Operational summary}
\label{subsec:capdr-alg}
In base mode, \tool follows the standard \pdr/\icthree loop and changes only proposal order. After the length-\(0\) check, it searches for a frontier CTI and creates a proof obligation. CP2 schedules pending obligations subject to fairness. Predecessor search either extends the counterexample chain or reaches an UNSAT query, after which CP1 orders candidate blockers and only guard-valid clauses are inserted. When the frontier closes, CP3 orders guarded push attempts. If \(F_k\equiv F_{k+1}\), the common frame is emitted as a candidate invariant. If enabled, post-fixpoint minimization retains only edits preserving the three SAFE-checker conditions. SAFE or UNSAFE is reported only after checker acceptance, and rejection is treated as \textsc{fail}. Full pseudocode is given in \cref{alg:capdr}.

\begin{algorithm}[h]
\caption{\small \tool core loop in base mode.}
\label{alg:capdr}
\scriptsize
\begin{algorithmic}[1]
\State \textbf{/* Trivial counterexample of length $0$ */}
\If{$\mathsf{SAT}(\Init(X)\wedge \neg\Prop(X))$}
  \State extract an initial bad state $x_0$ from the model
  \State set $\pi \gets (x_0)$
  \If{the UNSAFE checker accepts $\pi$}
    \State \Return UNSAFE with $\pi$
  \Else
    \State \Return \textsc{fail}
  \EndIf
\EndIf

\State Initialize $F_0 \gets \Init$; $F_1 \gets \Prop$; set $k \gets 0$
\State Create further frames lazily; when $F_{k+1}$ is first created, initialize it as $\Prop$
\State Initialize obligation multiset $\mathcal{Q} \gets \emptyset$

\While{true}
  \If{$\mathsf{SAT}(F_k(X)\wedge \Trans(X,U,X') \wedge \neg\Prop(X'))$}
    \State extract the bad-successor state cube $d$ over $X$ by unpriming the model's $X'$ assignment
    \State insert obligation $(k{+}1,d)$ into $\mathcal{Q}$
    \While{$\mathcal{Q}\neq\emptyset$}
      \State choose and remove an obligation $(i,d)$ from $\mathcal{Q}$ \Comment{policy-guided, fair}
      \If{$i = 0$ \textbf{or} $\mathsf{SAT}(\Init(X)\wedge d(X))$}
        \State reconstruct a trace $\pi$ ending in $\neg\Prop$ using the stored successor links and input witnesses
        \If{the UNSAFE checker accepts $\pi$}
          \State \Return UNSAFE with $\pi$
        \Else
          \State \Return \textsc{fail}
        \EndIf
      \EndIf

      \If{$\mathsf{SAT}(F_{i-1}(X)\wedge \Trans(X,U,X')\wedge d(X'))$}
        \State extract the predecessor state cube $d_{\mathsf{pred}}$ over $X$ from the model's $X$ assignment
        \State store the predecessor/input witness information
        \State reinsert $(i,d)$ into $\mathcal{Q}$ and insert $(i{-}1,d_{\mathsf{pred}})$ into $\mathcal{Q}$
      \Else
        \State construct candidate generalized blockers $\mathcal{C}(d,i)$,
        i.e., clauses $C=\neg d^\star$ for subcubes $d^\star\subseteq d$,
        with $\neg d\in\mathcal{C}(d,i)$
        \State order $\mathcal{C}(d,i)$ by policy score \Comment{fallback to deterministic order}
        \For{candidates $C$ in that order}
          \If{$\mathsf{UNSAT}(\Init(X)\wedge \neg C(X))$ \textbf{and}
          $\mathsf{UNSAT}(F_{i-1}(X)\wedge \Trans(X,U,X') \wedge \neg C(X'))$}
            \State add $C$ to frames $F_1,\dots,F_i$
            \State \textbf{break} \Comment{guaranteed by $C=\neg d$ after the initial-state check}
          \EndIf
        \EndFor
      \EndIf
    \EndWhile
  \Else
    \State attempt pushing learned clauses across frames $F_1,\dots,F_k$ using guard \cref{eq:push-check} \Comment{policy-guided order}
    \If{$F_k \equiv F_{k+1}$ ($\mathsf{canon}(F_k)=\mathsf{canon}(F_{k+1})$)}
      \State extract $\Inv \gets F_k$
      \State optionally minimize \(\Inv\) by keeping only edits that preserve the three SAFE-checker SAT conditions
      \If{the SAFE checker accepts $\Inv$}
        \State \Return SAFE with $\Inv$
      \Else
        \State \Return \textsc{fail}
      \EndIf
    \Else
      \State increment $k \gets k{+}1$ and initialize $F_{k+1}\gets \Prop$ if needed
    \EndIf
  \EndIf
\EndWhile
\end{algorithmic}
\end{algorithm}

It is worth noting that according to \cref{prop:base-blocker}, the blocking loop
in \cref{alg:capdr} is guaranteed to accept some guard-validated
candidate independently of the policy's ranking.

\section{Soundness Argument}
\label{sec:soundness}
What follows makes the trust boundary from \cref{subsec:separation} precise.
We first fix the checker model and the assumptions on the trusted components.
We then prove that checker acceptance implies semantic correctness of SAFE and UNSAFE certificates, and finally lift these certificate-level guarantees to \tool\ itself.
Consequently, the learned policy may affect search, but it does not participate in the correctness argument.

\vspace{-0.2cm}
\subsection{Checker model and assumptions}
\label{subsec:checker-defs}
This subsection fixes the checker-level objects used in the rest of \cref{sec:soundness}.
The following assumptions~A1-A2 are the only metatheoretic assumptions about the trusted checker from \cref{subsec:separation}: they state that its low-level reasoning and evaluation are correct.\smallskip

\noindent
\emph{Assumption A1 (Checker-query correctness):}
Answers returned by the SAT engine used inside the checker are correct: if UNSAT, the queried formula is unsatisfiable; if SAT, any returned assignment satisfies the queried formula.\smallskip

\noindent
\emph{Assumption A2 (Parsing and evaluation correctness):}
The checker faithfully parses and semantically encodes $(\Init,\Trans,\Prop)$ and the certificate, treats primed and unprimed variables consistently, correctly evaluates trace steps for UNSAFE certificates, and correctly implements any validation routine used for extension-mode definitional extensions.

Under assumptions~A1-A2, we define the acceptance conditions of the checkers used throughout the proofs.

\noindent
\emph{SAFE checker (base mode):}
Given a transition system $(\Init(X),\Trans(X,U,X'),\Prop(X))$ and a candidate invariant $\Inv(X)$,
the SAFE checker accepts if and only if all of the following are UNSAT:
\begin{align}
  &\Init(X) \wedge \neg\Inv(X) \label{eq:chk-init} \\
  &\Inv(X) \wedge \Trans(X,U,X') \wedge \neg\Inv(X') \label{eq:chk-step} \\
  &\Inv(X) \wedge \neg\Prop(X). \label{eq:chk-prop}
\end{align}
These are exactly the negations of~\cref{eq:init}-\cref{eq:prop}.

\noindent
\emph{UNSAFE checker:}
Given $(\Init,\Trans,\Prop)$ and a trace
$\pi=(x_0,u_0,x_1,\dots,u_{k-1},x_k)$,
the UNSAFE checker accepts if and only if:
\begin{itemize}
  \item $\Init(x_0)$ evaluates to true,
  \item for all $i<k$, $\Trans(x_i,u_i,x_{i+1})$ evaluates to true, and
  \item $\neg\Prop(x_k)$ evaluates to true.
\end{itemize}
Equivalently, it checks that $\pi$ is a valid finite execution ending in a bad state.

\noindent
\emph{SAFE checker (extension mode):}
When auxiliaries $Y$ are present, the emitted SAFE certificate is a pair $(\Def,\Inv(X,Y))$.
The extension-mode checker accepts if and only if
(i) it validates $\Def$ by a procedure whose correctness guarantees \cref{eq:defext-init,eq:defext-step}, and
(ii) it accepts $\Inv$ on the extended system $(\Init^{\mathsf{ext}},\Trans^{\mathsf{ext}},\Prop)$ using the SAFE checks~\cref{eq:chk-init}-\cref{eq:chk-prop} over state variables $(X,Y)$.

The next certificate-soundness results are semantic consequences of these
acceptance conditions. Assumptions~A1-A2 are needed only when relating those
mathematical conditions to the behavior of an actual checker implementation in
\cref{thm:capdr-sound}.

\vspace{-0.2cm}

\subsection{SAFE and UNSAFE certificates}
\label{subsec:theorem-safe}

\begin{restatable}[SAFE certificate soundness in base mode]{thm}{SafeCertSound}
\label{thm:safe-cert-sound}
If the SAFE checker in base mode accepts an invariant $\Inv$, then the transition system satisfies the safety property $\Prop$.
\end{restatable}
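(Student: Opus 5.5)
The plan is a standard induction on the length of executions. First, acceptance by the base-mode SAFE checker means that the three formulas \cref{eq:chk-init}--\cref{eq:chk-prop} are unsatisfiable. Read semantically, each unsatisfiable conjunction $A\wedge\neg B$ is exactly the valid implication $A\Rightarrow B$, as in the SAT-encoding definition. This yields \cref{eq:init}, \cref{eq:step} and \cref{eq:prop} as statements about all assignments to $X$, $U$ and $X'$. As noted after the checker definitions, this step is purely semantic. Relating the actual checker's answers to unsatisfiability is deferred to Assumptions~A1--A2 in \cref{thm:capdr-sound}, so here we take acceptance to mean these mathematical conditions.

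Next, by the definition of a reachable state, it suffices to show the following: for every $k\ge 0$ and every sequence $x_0,u_0,\dots,u_{k-1},x_k$ with $\Init(x_0)$ and $\Trans(x_i,u_i,x_{i+1})$ for all $i<k$, we have $\Inv(x_i)$ for all $i\le k$. We argue by induction on $i$. The base case $\Inv(x_0)$ follows from $\Init(x_0)$ and \cref{eq:init}, instantiated at the concrete assignment $X:=x_0$. For the step, assume $\Inv(x_i)$ and $\Trans(x_i,u_i,x_{i+1})$. Instantiating \cref{eq:step} with $X:=x_i$, $U:=u_i$, $X':=x_{i+1}$ gives $\Inv(x_{i+1})$. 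Here we must remember that $\Inv(X')$ denotes the primed copy, so its truth at $x_{i+1}$ over $X'$ equals the truth of $\Inv$ at $x_{i+1}$ over $X$.

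Finally, take any $x\in\Reach$ with a witnessing execution ending at $x_k=x$. The induction gives $\Inv(x)$, and \cref{eq:prop} then gives $\Prop(x)$. This is exactly the definition of safety.

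No step is genuinely hard. The only point needing care is the first one: making explicit that unsatisfiability of a (possibly Tseitin-encoded, equisatisfiable) clausal query corresponds to validity of the original implication. The encoding preserves satisfiability, so unsatisfiability of the encoding implies unsatisfiability of $\Phi$ itself, and hence validity of the implication. The second point of care is the primed-copy renaming in the inductive step.
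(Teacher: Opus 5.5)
Your proposal is correct and follows essentially the same route as the paper. Both read checker acceptance as the three implications \cref{eq:init}--\cref{eq:prop}, prove $\Inv(x_i)$ along any witnessing execution by induction on $i$, and conclude $\Prop(x)$ from $\Inv\Rightarrow\Prop$. Your remarks on the equisatisfiable clausal encoding and the primed-copy renaming make explicit details the paper leaves implicit, without changing the argument.
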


\begin{proof}
Assume the SAFE checker in base mode accepts $\Inv$.
Then, by definition of checker acceptance, the formulas
\cref{eq:chk-init,eq:chk-step,eq:chk-prop} are UNSAT.
Equivalently,
\begin{flalign}
 & \Init(X) \Rightarrow \Inv(X), \nonumber &\\
 & \Inv(X)\wedge \Trans(X,U,X') \Rightarrow \Inv(X') , \nonumber &\\
&  \Inv(X) \Rightarrow \Prop(X). \nonumber &
\end{flalign}

Let $x$ be any reachable state.
By definition of reachability, there exist $k\ge 0$, states
$x_0,\dots,x_k$, and inputs $u_0,\dots,u_{k-1}$ such that
$\Init(x_0)$, $\Trans(x_i,u_i,x_{i+1})$ for all $i<k$, and $x_k=x$.
We prove by induction on $i\in\{0,\dots,k\}$ that $\Inv(x_i)$ holds.

\emph{Base case ($i=0$):}
Since $\Init(x_0)$ and $\Init \Rightarrow \Inv$, we have $\Inv(x_0)$.

\emph{Inductive step:}
Assume $\Inv(x_i)$ holds for some $i<k$.
Since $\Trans(x_i,u_i,x_{i+1})$ holds and
$\Inv(X)\wedge \Trans(X,U,X') \Rightarrow \Inv(X')$, it follows that
$\Inv(x_{i+1})$ holds.

Thus $\Inv(x_k)$ holds.
Since $x_k=x$ and $\Inv \Rightarrow \Prop$, we obtain $\Prop(x)$.
Because $x$ was an arbitrary reachable state, every reachable state satisfies
$\Prop$.
Therefore the system is safe.
\end{proof}

\begin{restatable}[Soundness under conservative extension]{prp}{SoundnessConsExt}
\label{prop:defext-sound}
Assume $\Def$ satisfies~\cref{eq:defext-init}-\cref{eq:defext-step}, and let
$(\Init^{\mathsf{ext}},\Trans^{\mathsf{ext}},\Prop)$ be the extended system from \cref{subsec:cert-semantics}.
If the SAFE checker accepts an invariant $\Inv(X,Y)$ for the extended system, then the original transition system $(\Init(X),\Trans(X,U,X'),\Prop(X))$ is safe.
\end{restatable}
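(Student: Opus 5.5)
The plan is to reduce the statement to \cref{thm:safe-cert-sound} applied to the extended system, and then transfer safety back to the original system by lifting original executions to extended ones. First, acceptance of $\Inv(X,Y)$ on $(\Init^{\mathsf{ext}},\Trans^{\mathsf{ext}},\Prop)$ means the three checks \cref{eq:chk-init}--\cref{eq:chk-prop} are UNSAT over state variables $(X,Y)$. Hence, by \cref{thm:safe-cert-sound} (whose proof is purely semantic and applies verbatim to any finite-state system), every reachable state $(x,y)$ of the extended system satisfies $\Prop(x)$. Here $\Prop$ depends only on $X$.

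Second, we show that every reachable state of the original system is the $X$-projection of a reachable state of the extended system. Let $x$ be reachable via $x_0,u_0,\dots,u_{k-1},x_k=x$ with $\Init(x_0)$ and $\Trans(x_i,u_i,x_{i+1})$. We build $y_0,\dots,y_k$ by induction on $i$, maintaining the invariant $\Def(x_i,y_i)$.

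For the base case, \cref{eq:defext-init} gives $y_0$ with $\Def(x_0,y_0)$, so $\Init^{\mathsf{ext}}(x_0,y_0)$ holds.

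For the step, given $\Def(x_i,y_i)$ and $\Trans(x_i,u_i,x_{i+1})$, \cref{eq:defext-step} gives $y_{i+1}$ with $\Def(x_{i+1},y_{i+1})$. Therefore $\Trans^{\mathsf{ext}}$ holds on $((x_i,y_i),u_i,(x_{i+1},y_{i+1}))$, since its three conjuncts are $\Trans$, $\Def$ at step $i$, and $\Def'$ at step $i+1$.

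Thus $(x_0,y_0),\dots,(x_k,y_k)$ is an extended execution, so $(x,y_k)$ is reachable in the extended system. By the first step, $\Prop(x)$ holds, and the original system is safe.

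The only delicate point is the choice of invariant in the lifting induction. We must carry $\Def(x_i,y_i)$ explicitly, because \cref{eq:defext-step} only provides a $y'$ for the next state given a $\Def$-consistent current pair. The $\Def$ conjunct at the current step in $\Trans^{\mathsf{ext}}$ is then supplied by the inductive hypothesis rather than by the existential. The rest is routine.
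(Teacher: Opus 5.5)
Your proposal is correct and follows essentially the same route as the paper: lift the original execution to an extended one by induction carrying $\Def(x_i,y_i)$ via \cref{eq:defext-init} and \cref{eq:defext-step}, then use acceptance of $\Inv(X,Y)$ on the extended system to conclude $\Prop(x_k)$. The only cosmetic difference is that you invoke \cref{thm:safe-cert-sound} on the extended system as a black box, whereas the paper repeats that induction inline along the lifted execution.
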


\begin{proof}
Let $x=x_k$ be reachable in the original system. Then there exist $k\ge 0$ and a sequence
$x_0,u_0,x_1,\dots,u_{k-1},x_k$
such that $\Init(x_0)$ and $\Trans(x_i,u_i,x_{i+1})$ for all $i<k$.

We prove by induction on the prefix length $j\in\{0,\dots,k\}$ that there exist
$y_0,\dots,y_j$ such that
$\Init^{\mathsf{ext}}(x_0,y_0)$ holds and
$\Trans^{\mathsf{ext}}(x_i,u_i,y_i,x_{i+1},y_{i+1})$ holds for all $i<j$.

For $j=0$, $\Init(x_0)$ and~\cref{eq:defext-init} yield some $y_0$ with $\Def(x_0,y_0)$, hence $\Init^{\mathsf{ext}}(x_0,y_0)$.

For the inductive step, assume the claim holds for some $j<k$. In particular, we have
$\Def(x_j,y_j)$ and $\Trans(x_j,u_j,x_{j+1})$.
By~\cref{eq:defext-step} there exists $y_{j+1}$ with $\Def(x_{j+1},y_{j+1})$.
By definition of $\Trans^{\mathsf{ext}}$, this implies
\[
  \Trans^{\mathsf{ext}}(x_j,u_j,y_j,x_{j+1},y_{j+1}).
\]
Thus the claim holds for $j+1$.

Therefore there exist $y_0,\dots,y_k$ such that
$(x_0,y_0),\dots,(x_k,y_k)$ is an execution of the extended system.
Since the SAFE checker accepts $\Inv(X,Y)$ on the extended system, checker acceptance gives
\[
  \Init^{\mathsf{ext}}(X,Y)\Rightarrow \Inv(X,Y),
\]
\[
  \Inv(X,Y)\wedge \Trans^{\mathsf{ext}}(X,U,Y,X',Y')
  \Rightarrow \Inv(X',Y'),
\]
and
\[
  \Inv(X,Y)\Rightarrow \Prop(X).
\]
Hence, by induction along the lifted execution, $\Inv(x_k,y_k)$ holds, and therefore $\Prop(x_k)$ holds.

Since $x_k=x$ was an arbitrary reachable original state, all reachable original states satisfy $\Prop$.
Thus the original transition system is safe.
\end{proof}

\begin{restatable}[UNSAFE certificate soundness]{thm}{UNSAFECertSound}
\label{thm:unsafe-cert-sound}
If the UNSAFE checker accepts a trace $\pi$, the transition system violates $\Prop$.
\end{restatable}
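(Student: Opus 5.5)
The plan is to unfold the UNSAFE checker's acceptance conditions from \cref{subsec:checker-defs} and show that the accepted trace itself witnesses the violation required by the definitions of reachability and safety. As with \cref{thm:safe-cert-sound}, the argument is purely semantic: acceptance is identified with the three listed conditions holding. Assumptions A1--A2 are needed only later, to relate an implementation's behaviour to these conditions.

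\emph{Step 1.} Assume the UNSAFE checker accepts $\pi=(x_0,u_0,x_1,\dots,u_{k-1},x_k)$. By the definition of acceptance, three facts hold:
\begin{itemize}
\item $\Init(x_0)$ is true;
\item $\Trans(x_i,u_i,x_{i+1})$ is true for every $i<k$;
\item $\neg\Prop(x_k)$ is true.
\end{itemize}

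\emph{Step 2.} The sequences $x_0,\dots,x_k$ and $u_0,\dots,u_{k-1}$ are exactly the witnesses demanded by the definition of a reachable state, with $x=x_k$. Hence $x_k\in\Reach$. The degenerate case $k=0$, the length-$0$ trace emitted by \cref{alg:capdr}, is covered directly: there the transition condition is vacuous and $\Init(x_0)$ alone gives $x_0\in\Reach$.

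\emph{Step 3.} Since $x_k\in\Reach$ and $\neg\Prop(x_k)$, the universal statement $\forall x.\,(x\in\Reach\Rightarrow\Prop(x))$ fails at $x_k$. So the system is not safe, i.e., it violates $\Prop$.

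No step is a genuine obstacle. The only point needing care is bookkeeping: the indices and inputs of the trace must match the reachability definition verbatim, including the $k=0$ case, and the assignments must be read over the same variables $X$ and $U$ as in the system specification. This mirrors the consistent treatment of primed and unprimed variables that A2 guarantees at the implementation level.
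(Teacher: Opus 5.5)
Your proposal is correct and follows essentially the same route as the paper's proof: you unfold the three acceptance conditions, use the first two to show that $x_0,\dots,x_k$ with inputs $u_0,\dots,u_{k-1}$ witness reachability of $x_k$, and use $\neg\Prop(x_k)$ to falsify the safety condition. Your explicit handling of the $k=0$ case and your remark that A1--A2 are not needed at this semantic level agree with the paper, and are slightly more explicit than its proof.
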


\begin{proof}
Let $\pi=(x_0,u_0,x_1,\dots,u_{k-1},x_k)$ be a trace accepted by the UNSAFE checker.
By the checker's definition:
(i) $\Init(x_0)$ holds,
(ii) for all $i<k$, $\Trans(x_i,u_i,x_{i+1})$ holds, and
(iii) $\neg\Prop(x_k)$ holds.

Items (i) and (ii) together show that $x_0,\dots,x_k$ is a concrete execution of the transition system.
Item (iii) shows the final state violates the safety property.
Therefore there exists a reachable state violating $\Prop$, so the system is unsafe.
\end{proof}

\subsection{Soundness of \tool as a certificate-producing procedure}
\label{subsec:capdr-soundness}
The previous theorems establish correctness of certificates \emph{independent} of how they were found.
We now connect this to \tool. A \tool run outputs either SAFE (with a candidate invariant $\Inv$), or UNSAFE (with a candidate trace $\pi$).
In both cases, \tool reports the claim as final output only after the corresponding checker accepts the emitted artifact; otherwise the run is treated as \textsc{fail}, not as a SAFE/UNSAFE result. This reporting discipline is stated in \cref{subsec:capdr-alg} and made explicit in the full pseudocode of \cref{alg:capdr}.

\begin{restatable}[\tool\ soundness]{thm}{toolsound}
\label{thm:capdr-sound}
Assuming assumptions~A1-A2, 
\begin{itemize}
  \item If \tool returns SAFE with a certificate accepted by the appropriate SAFE checker (base mode or extension mode), then the original transition system satisfies $\Prop$.
  \item If \tool returns UNSAFE with a certificate accepted by the UNSAFE checker, then the original transition system violates $\Prop$.
\end{itemize}
\end{restatable}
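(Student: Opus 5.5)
The plan is to reduce both claims to the certificate-level results already established, using Assumptions~A1--A2 only to bridge from the behaviour of an actual checker implementation to the mathematical acceptance conditions of \cref{subsec:checker-defs}. The reporting discipline of \cref{subsec:capdr-alg} and \cref{alg:capdr} lets us ignore how the artifact was produced. The verifier, policy, feature extraction, ranking, fairness and replay machinery all lie outside the argument. Every \textsc{return} SAFE/UNSAFE in \cref{alg:capdr} sits inside a branch guarded by checker acceptance, and every rejection yields \textsc{fail}. Hence any SAFE/UNSAFE output comes with an artifact the implemented checker accepted.

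\emph{SAFE case, base mode.} Suppose the implemented SAFE checker accepts $\Inv(X)$. Then it reported UNSAT on the encodings of \cref{eq:chk-init,eq:chk-step,eq:chk-prop}. By A2, the formulas it queried are faithful encodings of these three conditions for the actual $(\Init,\Trans,\Prop)$ and $\Inv$, with primed and unprimed variables handled consistently. By A1, the UNSAT answers are correct. So the three formulas really are unsatisfiable, which means the mathematical acceptance condition holds, and \cref{thm:safe-cert-sound} gives safety.

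\emph{SAFE case, extension mode.} Here the certificate is $(\Def,\Inv(X,Y))$. By A2, the $\Def$-validation routine is correctly implemented, so \cref{eq:defext-init,eq:defext-step} hold. The check of $\Inv$ on $(\Init^{\mathsf{ext}},\Trans^{\mathsf{ext}},\Prop)$ is then handled by the same A1/A2 argument as in base mode, now over the state variables $(X,Y)$. \cref{prop:defext-sound} then yields safety of the original system.

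\emph{UNSAFE case.} Acceptance by the implemented trace checker means it evaluated $\Init(x_0)$, each $\Trans(x_i,u_i,x_{i+1})$, and $\neg\Prop(x_k)$ to true. By A2, which guarantees correct parsing and trace evaluation, these are the true semantic values. So the mathematical acceptance condition holds, and \cref{thm:unsafe-cert-sound} gives a violation. This also covers the length-$0$ trace returned by the initial check.

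The only delicate step is the implementation-to-semantics bridge. It requires that the acceptance predicate be stated exactly as the conjunction of UNSAT verdicts or evaluations, and that A2 cover equisatisfiable clausal encodings. An UNSAT verdict on a Tseitin-style encoding implies that the original formula is unsatisfiable, and that implication is what the argument needs. Once this is in place, the theorem is a direct composition of the earlier results, and it is independent of the policy $\theta$.
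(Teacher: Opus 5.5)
Your proposal is correct and follows the paper's proof. Both use the same three-way case split (base-mode SAFE, extension-mode SAFE, UNSAFE), each discharged by \cref{thm:safe-cert-sound}, \cref{prop:defext-sound}, and \cref{thm:unsafe-cert-sound} respectively, once one notes that a verdict is returned only after checker acceptance; the only difference is that you spell out the A1/A2 bridge from the implemented checker's UNSAT verdicts and trace evaluations (including the equisatisfiable-encoding step) to the mathematical acceptance conditions, which the paper's proof leaves implicit.
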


\begin{proof}
If \tool returns SAFE in base mode, it does so only after the base SAFE checker accepts $\Inv$.
By \cref{thm:safe-cert-sound}, the original system satisfies $\Prop$.

If \tool returns SAFE in extension mode, it does so only after the extension-mode SAFE checker accepts $(\Def,\Inv)$.
By definition of that checker, $\Inv$ is accepted on the extended system and $\Def$ is validated against a criterion that guarantees~\cref{eq:defext-init,eq:defext-step}.
Hence \cref{prop:defext-sound} applies, and the original system satisfies $\Prop$.

If \tool returns UNSAFE, it does so only after the UNSAFE checker accepts $\pi$.
By \cref{thm:unsafe-cert-sound}, the original system violates $\Prop$.
\end{proof}

\begin{restatable}{corol}{PolicyIndependence}
\label{cor:policy-indep}
Under assumptions~A1-A2, \tool is sound regardless of the learned policy.
\end{restatable}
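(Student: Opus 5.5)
The corollary follows directly from \cref{thm:capdr-sound}. The point to make explicit is that the hypotheses and conclusions of that theorem never mention the policy. I would fix an arbitrary policy, meaning any scorer \(g_\theta\) or indeed any function that returns an order over each finite proposal set \(A\), possibly adversarial or poorly transferred. I would then consider an arbitrary run of \tool under that policy and show that every reported SAFE/UNSAFE claim is semantically correct.

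\emph{Step 1: locate the policy's influence.} Inspecting \cref{alg:capdr}, the policy enters only at the commented lines. These are the choice of the next obligation at CP2, the order of \(\mathcal{C}(d,i)\) at CP1, and the order of push attempts at CP3 (CP4 is disabled in base mode, and in extension mode it is still subject to conservative validation). In each case the policy supplies an order over verifier-generated proposals, and it cannot skip the guard tests or the final checker invocation. In particular, every \textbf{return} SAFE or \textbf{return} UNSAFE statement is syntactically preceded by a checker acceptance test whose input is the emitted artifact and the fixed system \((\Init,\Trans,\Prop)\). Nothing in that test depends on the policy.

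\emph{Step 2: apply the theorem.} Whatever order the policy induced, a run that returns SAFE has had its emitted \(\Inv\), or \((\Def,\Inv)\), accepted by the appropriate SAFE checker. A run that returns UNSAFE has had its trace \(\pi\) accepted by the UNSAFE checker. Under A1--A2, \cref{thm:capdr-sound} then gives that the original system satisfies, respectively violates, \(\Prop\). The theorem is quantified over all accepted artifacts and not over how they were produced, so the conclusion holds uniformly over policies. All other outcomes (\textsc{fail}, timeouts, nontermination) make no claim and are therefore vacuously sound.

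\emph{Main obstacle.} The only delicate point is confirming that the policy cannot reach the trusted components. It must not be able to alter the artifact after checking, alter the checker's inputs, or affect the checker's internal SAT calls. I would argue this from the trust boundary in \cref{subsec:separation}: the checker is standalone, consumes only the system and the emitted certificate, and is covered by A1--A2, while the policy belongs to the untrusted verifier. I would also remark that \cref{prop:base-blocker} is \emph{not} needed for soundness. It guarantees progress of the CP1 loop, not correctness, so the corollary holds even for policies that degrade efficiency arbitrarily.
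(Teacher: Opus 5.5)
Your proposal is correct and follows essentially the same route as the paper. The policy only orders verifier-generated proposals inside the untrusted search, and every SAFE/UNSAFE return is gated by checker acceptance of the emitted artifact, so the hypotheses of \cref{thm:capdr-sound} do not depend on the policy and the theorem applies uniformly; your explicit quantification over arbitrary (even adversarial) orderings and your remark that \cref{prop:base-blocker} concerns progress rather than correctness only sharpen the paper's brief argument, which, like yours, rests on this checker-gated reporting discipline.
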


\begin{proof}
The learned policy is consulted only inside the untrusted search
procedure. Whatever ranking it returns, every clause insertion, push, or
optional extension is committed only through the verifier's guarded
control logic, and \tool reports SAFE or UNSAFE only after the
corresponding independent checker accepts the emitted artifact. Hence the
certificate-level hypotheses used in \cref{thm:capdr-sound} are
unchanged by the choice of policy. Therefore every SAFE/UNSAFE result
returned by \tool is correct regardless of the learned policy.
\end{proof}

\subsection{Remark: completeness and progress}
\label{subsec:completeness-remark}
Soundness is independent of completeness.
An adversarial policy can still delay progress by prioritizing unhelpful candidates.
To recover the usual completeness argument of finite-state \pdr, the implementation
should ensure that the underlying \emph{baseline} search remains guard-complete:
(i) blocking candidate sets are nonempty, with $\neg d$ always available when
~\cref{eq:pred-query} is UNSAT,
(ii) if the scorer is unavailable, the engine reverts to a deterministic baseline order, and
(iii) the scheduler is fair in the sense that every pending obligation and every
pending push opportunity is eventually served by that baseline order
(bounded starvation suffices).

\section{Experimental Evaluation}
\label{sec:experiments}
\tool improves the checker-accepted artifact, not merely the
verifier's time-to-answer. We separate five questions: whether online
guidance improves the certified workflow objective, whether the raw SAFE
invariants become smaller and faster to check, whether the online-search
advantage survives the same strong post-hoc minimizer applied to both methods,
whether replay removes residual nondeterminism, and which guarded \pdr choice
points contribute most. The \tool\ (learned) configuration measures the
online-search contribution. The \tool+\textsc{inv-min} configuration is
reported separately as lightweight safe post-fixpoint minimization, while the
PDR$+$\cite{IvriiGurfinkelBelov2014SmallInvariants} and \tool$+$\cite{IvriiGurfinkelBelov2014SmallInvariants} configurations use one shared MSIS implementation.

\vspace{-0.3cm}
\subsection{Benchmarks, tool configurations, and protocol}
\label{subsec:eval-setup}

\subsubsection{Benchmarks}

We evaluate on the \emph{bit-level safety benchmarks} from the Hardware Model Checking Competition (HWMCC), which provides bit-level instances in the AIGER format for and-inverter graphs together with a workflow based on independently checkable certificates.
We use two disjoint benchmark corpora. The development corpus
\(\mathcal{B}_{\mathrm{dev}}\) is the deduplicated bit-vector
single-property AIGER subset of the HWMCC'20 benchmark archive~\cite{HWMCC20},
with all HWMCC'24 benchmark names and canonical AIGER hashes removed
(its SAFE subset contains 190 instances), and is used only for offline fitting,
model selection, and ablation. This is a natural development source for HWMCC:
it is competition-selected and bit-blasted to AIGER, matching the bit-level
setting while remaining disjoint from the certificate-mandated HWMCC'24 test
corpus. The held-out evaluation corpus \(\mathcal{B}_{\mathrm{eval}}\) is the
full 319-instance HWMCC'24 bit-level safety archive. This archive is a natural
target for \tool because HWMCC'24 made independently checkable certificates
part of the bit-level safety workflow~\cite{FroleyksYuPreinerBiereHeljanko2025CertificatesHWMCC}.
No HWMCC'24 evaluation instance contributes training examples, tuning decisions,
or online adaptation.

\subsubsection{Compared configurations}
We distinguish configurations used for development ablations from the frozen configurations used for the held-out headline comparison. Development-stage configurations are:
\emph{PDR (baseline)}, \emph{\tool\ (random ranking)}, \emph{\tool\ (CP1 only)}, \emph{\tool\ (CP1+CP2)}, \emph{\tool\ (CP1-CP3 full)}, \emph{\tool\ (CP1-CP3 full) + inv-min}, and deterministic \emph{\tool\ (CP1-CP3 full) + replay} for replay-fidelity evaluation only. Evaluation-stage configurations are \emph{PDR (baseline)}, \emph{\tool\ (learned)}, and \emph{\tool\ (learned) + inv-min}. The paired post-hoc experiment additionally uses PDR$+$\cite{IvriiGurfinkelBelov2014SmallInvariants} and \tool$+$\cite{IvriiGurfinkelBelov2014SmallInvariants} on the common SAFE subset; these are not new solving configurations, but the same MSIS postprocessor applied to checker-accepted raw outputs. The baseline is our implementation of~\cite{Bradley2011IC3}. All configurations use the same preprocessing, incremental SAT backend and interface, resource limits, certificate emitter, and independent checker.
The learned configuration differs from the baseline only in proposal ordering. The \textsc{inv-min} variant additionally removes duplicate or subsumed clauses and greedily attempts clause and literal deletions, retaining only edits for which \cref{eq:chk-init,eq:chk-step,eq:chk-prop} remain UNSAT; it makes no minimality claim.

\subsubsection{Implementation, training, and evaluation protocol}
\label{subsec:eval-impl}
We implement \tool\ as a prototype extension of a standard incremental \pdr/\icthree engine for AIGER bit-level safety. SAFE results are validated by a standalone SAT-based checker on the discovered CNF invariant, and UNSAFE traces by independent simulation. All solved counts and medians include only checker-accepted results. A verifier claim rejected by the checker is counted as \textsc{fail}, not as solved. UNSAFE traces receive no special learning-based quality treatment. They are included to maintain a uniform trust boundary in which every reported result is independently checked.

These runs are controlled internal comparisons rather than HWMCC standings. We do not claim that our baseline is the strongest available PDR engine. The host, preprocessing pipeline, solver integration, certificate representation, and checker differ from competition tools. In base mode, the emitted artifact is already a CNF invariant whose semantics is exactly the three SAT checks in \cref{eq:chk-init,eq:chk-step,eq:chk-prop}. Therefore \(t_{\mathsf{chk}}\) measures that artifact directly. Certifaiger checks competition witness circuits~\cite{FroleyksYuPreinerBiereHeljanko2025CertificatesHWMCC}. Using it here would first require a CNF-to-witness-circuit conversion and would mix conversion engineering with the effect of online ranking. We leave a common witness-circuit workflow to future work.

For CP1--CP3, the prototype uses the feature map described in \cref{subsec:policy-interface}.
At CP1, an action means trying a blocker \(C=\neg d^\star\) first within \(\mathcal{C}(d,i)\).
At CP2, it means servicing one ready obligation next.
At CP3, it means attempting one eligible push first.
The model never proposes a clause outside the verifier-generated CP1 set and never bypasses a guard.

Training examples are within-event candidate pairs. A pair is retained only when both alternatives have an observed development continuation cost or a rollout-estimated cost from the same checkpoint under the matched protocol in \cref{subsec:policy-interface}. The first proposal is the only intentionally changed verifier choice. The alternative with lower scalarized objective \(J\) is preferred. Timeouts, resource exits, and checker rejections receive the development-only failure treatment from Appendix~\ref{app:local-objective}.
The prototype uses the linear scorer and the \(L_2\)-regularized pairwise logistic loss from \cref{subsec:policy-interface} and Appendix~\ref{app:local-objective}.
In ablations, events from disabled choice points are omitted and the remaining scorer is trained by the same procedure.
The model is fitted once on \(\mathcal{B}_{\mathrm{dev}}\), frozen, and reused unchanged across every instance of \(\mathcal{B}_{\mathrm{eval}}\).
No evaluation benchmark contributes training examples, model selection, tuning decisions, or online adaptation.
Regularization and rollout-budget choices are made only on \(\mathcal{B}_{\mathrm{dev}}\).

Each campaign records the split manifest, canonical hashes, random seeds, replay logs, frozen-model identifiers, checker outcomes, and per-instance logs used to regenerate \cref{tab:overall,tab:msis-paired,tab:ablation,fig:tradeoff,fig:checker-cdf,fig:stability}. Reported learned-policy runtimes include feature extraction, scoring, and all guard checks, but exclude the one-time offline fitting cost. For every configuration labeled \(+\textsc{inv-min}\), the reported
runtime \(t\) includes the post-fixpoint minimization pass. We ran our experiments on a single Windows~11 machine with an Intel Core i7-11800H CPU and 64\,GB RAM, with a per-instance timeout of 1 hour and a memory limit of 56\,GB.

\subsubsection{Metrics}
We report the vector metrics \((t,\mathrm{size}(\pi),t_{\mathsf{chk}})\) from \cref{subsec:metrics-objective} and, when useful, the scalar objective \(J=\alpha t+\beta\,\mathrm{size}(\pi)+\gamma t_{\mathsf{chk}}\) with \((\alpha,\beta,\gamma)=(1,10^{-3},1)\). The factor \(10^{-3}\) weights 1000 certificate-size units as one second of scalar cost. The weights are fixed before evaluation, and the underlying vector metrics remain primary.
For SAFE instances, \(\mathrm{size}(\pi)=\mathrm{lit}(\Inv)\) and \(t_{\mathsf{chk}}\) is measured by the standalone CNF-invariant checker. Since \(\mathrm{size}(\pi)\) is outcome-specific, mixed SAFE/UNSAFE summaries in \cref{tab:overall} are workflow-level summaries. SAFE-only certificate quality is reported in \cref{tab:msis-paired,tab:ablation,fig:tradeoff,fig:checker-cdf}.

\subsubsection{Stability protocol}
Unless stated otherwise, solve counts and the medians of \(t\), \(\mathrm{size}(\pi)\), and \(t_{\mathsf{chk}}\) are reported for the fixed primary evaluation seed. The 10-seed campaign is used only for instability and replay-fidelity measurements.
Each configuration is run under 10 seeds while varying only low-level solver randomness; verifier-level tie-breaking remains fixed except in the random-ranking control. For each SAFE benchmark, we aggregate all pairwise cross-seed distances $\Delta_{\Inv}^{\mathsf{seed}}(s,s')$ by the median. Replay is evaluated separately by deterministic re-execution from the recorded log and measuring $\Delta_{\Inv}^{\mathsf{replay}}$ against the original artifact.
Distances are computed only for seed pairs in which both runs produce SAFE
certificates for the same benchmark. Benchmarks with fewer than two SAFE
certificates for a configuration are omitted from the distance distribution,
while coverage is reported separately. \Cref{tab:ablation} reports the median of these per-benchmark values
over eligible benchmarks, whereas \cref{fig:stability} shows their
distribution.

\vspace{-0.3cm}
\subsection{Results}
\label{subsec:eval-main}

\subsubsection{Overall end-to-end results}
\label{subsec:eval-overall}

\begin{table}[t]
\centering
\caption{\footnotesize
Held-out results on the 319 HWMCC'24 bit-level safety benchmarks.
Medians are over each configuration's checker-accepted solved set. The
``Unsolved'' column includes timeouts, resource exits, and checker rejections. Lower is better for \(t\), \(\mathrm{size}(\pi)\), \(t_{\mathsf{chk}}\), and \(J\).
The reported median \(J\) is the median of per-run scalarized costs, not
\(J\) evaluated on the displayed component medians.}
\label{tab:overall}
\small
\setlength{\tabcolsep}{4.5pt}
\resizebox{\columnwidth}{!}{%
\begin{tabular}{lrrrrrr}
\toprule
\textbf{Configuration} &
\textbf{Solved (SAFE/UNSAFE)} &
\textbf{Unsolved} &
\textbf{Med.\ $t$ [s]} &
\textbf{Med.\ $\mathrm{size}(\pi)$} &
\textbf{Med.\ $t_{\mathsf{chk}}$ [s]} &
\textbf{Med.\ $J$} \\
\midrule
PDR (baseline)      & 287 & 32 & 109.7 & 19365 & 1.24 & 151.1 \\
\tool (learned)     & 293 & 26 & 115.2 & 14596 & 0.63 & 142.4 \\
\tool + inv-min     & 293 & 26 & 117.5 & 12785 & 0.55 & 140.5 \\
\bottomrule
\end{tabular}}
\end{table}

\cref{tab:overall} reports certified end-to-end performance on the held-out evaluation benchmarks under the disjoint development/evaluation protocol from \cref{subsec:eval-impl}.
The key online-search comparison is the baseline row against
\tool\ (learned): neither row uses post-fixpoint invariant minimization. Thus
the certificate reduction in this comparison comes from CP1--CP3 guidance during
\pdr search, not from a final cleanup pass.

Under the primary evaluation seed, \tool reduces unsolved instances from \(32\) to \(26\) (six additional checker-accepted solutions), while improving the deliverable substantially: the median certificate proxy shrinks by \(24.6\%\) (19{,}365 \(\rightarrow\) 14{,}596) and the median independent checker time drops by \(49\%\) (1.24\,s \(\rightarrow\) 0.63\,s, about \(2\times\) faster checking). The median solving time increases by about \(5\%\) (109.7\,s \(\rightarrow\) 115.2\,s), consistent with spending additional SAT effort on lemma generalization and pushing. Under the fixed
weights \((\alpha,\beta,\gamma)=(1,10^{-3},1)\), these certificate improvements yield a lower median objective
(151.1 \(\rightarrow\) 142.4).

Post-fixpoint minimization (\tool+\textsc{inv-min}) is complementary:
it further reduces the median certificate proxy to 12{,}785 ($34.0\%$ below baseline) and cuts median
checking time to 0.55\,s ($55.6\%$ below baseline), with only a small additional runtime overhead
(117.5\,s median) and unchanged coverage (293 solved).
Overall, \tool trades a small amount of solving time for substantially smaller and cheaper-to-check certificates.

Because the solved sets differ across configurations, the medians in \cref{tab:overall} should be read as workflow-level summaries over each configuration's solved instances. The paired SAFE-instance comparison in \cref{fig:tradeoff} controls for this by restricting to benchmarks solved by both baseline \pdr\ and \tool.

\subsubsection{Fair post-hoc MSIS comparison}
\label{subsec:eval-msis}
To isolate online search from post-fixpoint minimization, we apply one shared implementation of the combined BIG+NFN method defined by~\cite{IvriiGurfinkelBelov2014SmallInvariants} to the 180 SAFE benchmarks solved by both baseline \pdr and \tool. In the terminology of~\cite{IvriiGurfinkelBelov2014SmallInvariants}, the pipeline runs NEC, FEAS, a second NEC pass, and BigMSIS. It deletes clauses only and returns a subset-minimal safe inductive subset, without claiming minimum cardinality or literal count. Because each evaluated AIGER job has one bad output, the fixed safety clause $\neg\mathit{Bad}$, where $\mathit{Bad}$ denotes that output, is used as the initial necessary set. Every other fixed-point clause is a deletion candidate. Both inputs use the same SAT backend, canonical clause order, one-hour postprocessing limit, and independent checker. The original solving times are unchanged and are not repeated; the reported $t_{\mathsf{msis}}$ is the additional postprocessing time only. PDR$+$\cite{IvriiGurfinkelBelov2014SmallInvariants} completes on 178 of the 180 inputs and \tool$+$\cite{IvriiGurfinkelBelov2014SmallInvariants} on all 180. To keep the quality comparison paired, all medians in \cref{tab:msis-paired} are computed on the 178 instances completed by both. Every completed postprocessed invariant is accepted by the independent checker.

\begin{table}[h]
\centering
\caption{\footnotesize
Fair paired post-hoc minimization on the common HWMCC'24 SAFE subset. The $+$\cite{IvriiGurfinkelBelov2014SmallInvariants} rows use the same combined BIG+NFN implementation of~\cite{IvriiGurfinkelBelov2014SmallInvariants}. ``Completed'' is over all 180 paired inputs; all medians are over the 178 inputs completed by both. Raw lit. is measured before postprocessing; the remaining size and checker columns are measured after it. Lower is better.}
\label{tab:msis-paired}
\small
\setlength{\tabcolsep}{3.6pt}
\resizebox{\columnwidth}{!}{%
\begin{tabular}{lrrrrrr}
\toprule
\textbf{Configuration} &
\textbf{Completed} &
\textbf{Raw lit.} &
\textbf{MSIS clauses} &
\textbf{MSIS lit.} &
\textbf{$t_{\mathsf{msis}}$ [s]} &
\textbf{$t_{\mathsf{chk}}$ [s]} \\
\midrule
PDR$+$\cite{IvriiGurfinkelBelov2014SmallInvariants}   & 178/180 & 19{,}842 & 3{,}274 & 14{,}026 & 11.4 & 0.78 \\
\tool$+$\cite{IvriiGurfinkelBelov2014SmallInvariants} & 180/180 & 14{,}731 & 2{,}807 & 11{,}741 &  7.6 & 0.48 \\
\bottomrule
\end{tabular}}
\end{table}

From the displayed paired medians, the shared minimizer removes 29.3\% of the baseline's raw literals and 20.3\% of \tool's. The larger proportional reduction for PDR is consistent with greater removable redundancy in its raw fixed points. Nevertheless, equal postprocessing does not erase the online-search aability-protdvantage: relative to PDR$+$\cite{IvriiGurfinkelBelov2014SmallInvariants}, \tool$+$\cite{IvriiGurfinkelBelov2014SmallInvariants} has 16.3\% fewer literals and 14.3\% fewer clauses, with 38.5\% lower checker time and 33.3\% less postprocessing time. Thus the common minimizer narrows, but does not close, the gap. This suggests that \tool often reaches a different useful clause core, rather than merely producing an invariant with less removable redundancy.

\subsubsection{Ablation on choice points and stability}
\label{subsec:eval-ablation}

\begin{table}[h]
\centering
\caption{\footnotesize
Development-only SAFE-invariant ablation on \(\mathcal{B}_{\mathrm{dev}}\).
Medians are over each row's certified SAFE set. The complement of the certified count comprises timeouts, resource exits, and checker rejections, all treated as \textsc{fail}. The last column lists cross-seed Jaccard distance, except for the replay row where replay distance is given. Lower is better. For the replay row, the runtime, invariant-size, and checker-time
entries are those of the original full-policy runs; only the final
column reports replay fidelity.
}
\label{tab:ablation}
\small
\setlength{\tabcolsep}{4.5pt}
\resizebox{\columnwidth}{!}{%
\begin{tabular}{lrrrrr}
\toprule
\textbf{Configuration} &
\textbf{Certified SAFE} &
\textbf{Med.\ $t$ [s]} &
\textbf{Med.\ $\mathrm{lit}(\Inv)$} &
\textbf{Med.\ $t_{\mathsf{chk}}$ [s]} &
\textbf{Med.\ $\Delta_{\Inv}$} \\
\midrule
PDR (baseline)                    & 176/190 & 123.2 & 16154 & 3.51 & 0.43 \\
\tool (random ranking)            & 171/190 & 135.4 & 17917 & 3.75 & 0.50 \\
\tool (CP1 only: blockers)         & 175/190 & 125.1 & 11314 & 2.21 & 0.29 \\
\tool (CP1+CP2: +obligation order)  & 178/190 & 131.4 & 10336 & 2.01 & 0.25 \\
\tool (CP1-CP3 full)               & 179/190 & 125.3 &  9620 & 1.61 & 0.20 \\
\tool (CP1-CP3 full) + inv-min     & 179/190 & 127.8 &  7990 & 1.31 & 0.20 \\
\tool (CP1-CP3 full) + replay      & 179/190 & 125.3 &  9620 & 1.61 & 0.00 \\
\bottomrule
\end{tabular}}
\end{table}

\begin{figure*}[t]
  \centering
  \subfloat[Per-instance tradeoff on SAFE benchmarks]{
    \includegraphics[height=0.175\linewidth]{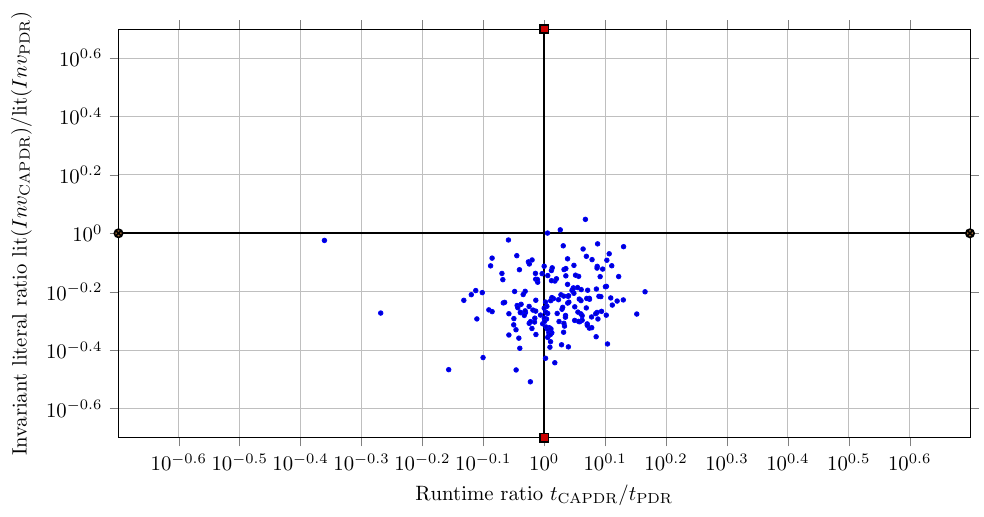}  
    \label{fig:tradeoff}}
  \hfil
  \subfloat[Checker time distribution on SAFE certificates]{
    \includegraphics[height=0.175\linewidth]{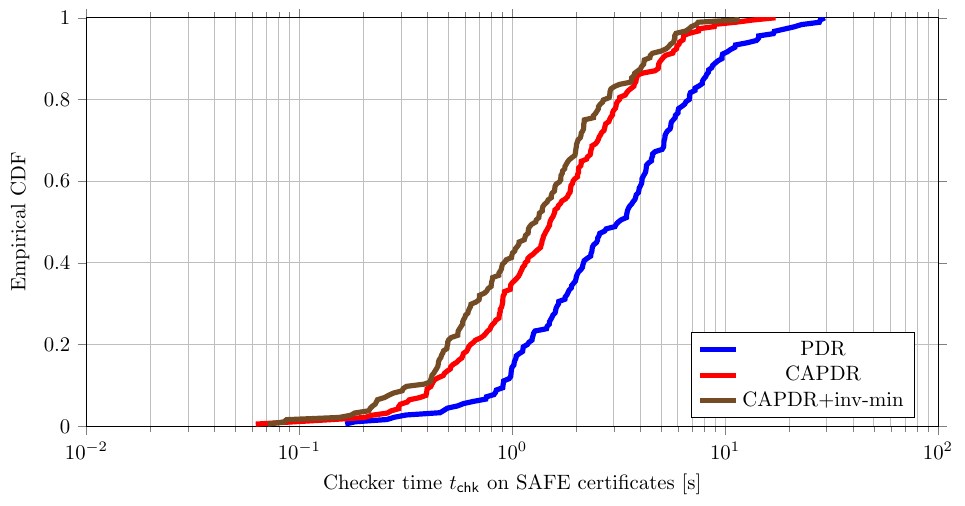}%
    \label{fig:checker-cdf}}
  \hfil
  \subfloat[Cross-seed instability and replay fidelity]{
    \includegraphics[height=0.175\linewidth]{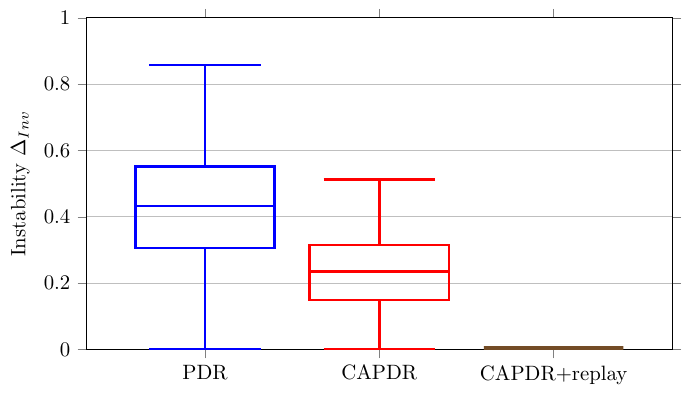}%
    \label{fig:stability}}   
\caption{\footnotesize
SAFE-certificate quality and reproducibility.
Panels (a) and (b) use the held-out HWMCC'24 evaluation corpus.
(a) Common 180-instance SAFE subset, with runtime and invariant-size
ratios relative to baseline; parity is \(1\).
(b) Empirical CDF of SAFE checker time on the same common subset,
where left is faster.
(c) On \(\mathcal{B}_{\mathrm{dev}}\), cross-seed instability for PDR
and \tool, and replay-fidelity distance for
\tool+\textsc{replay}.}
\end{figure*}

On the development split, \cref{tab:ablation} attributes the certificate gains on SAFE instances to specific \pdr choice points and serves both as an ablation study and as a characterization of the policy-design space used for offline fitting.
A random ranking policy is a useful negative control: it degrades solvability (171/190 vs.\ 176/190) and
inflates both $\mathrm{lit}(\Inv)$ and $t_{\mathsf{chk}}$, confirming that \tool's gains are not an artifact of
nondeterminism but of informed decisions. As in \cref{tab:overall}, the medians in \cref{tab:ablation} are computed over each configuration's own certified SAFE set and should therefore be interpreted jointly with the coverage column, rather than as paired common-subset comparisons.

Guiding CP1 (blocker selection/generalization) provides most of the benefit:
CP1 alone reduces the median invariant literal count by $30\%$ (16{,}154 $\rightarrow$ 11{,}314) and cuts
median checker time by $37\%$ (3.51\,s $\rightarrow$ 2.21\,s) with essentially unchanged runtime.
Adding guidance at CP2-CP3 yields further improvements, reaching a $40\%$ reduction in $\mathrm{lit}(\Inv)$
(16{,}154 $\rightarrow$ 9{,}620) and a $54\%$ reduction in checker time (3.51\,s $\rightarrow$ 1.61\,s) for the
full policy, while also solving three additional SAFE instances (179/190). Stability improves as well: the median cross-seed Jaccard distance drops from $0.43$ to $0.20$
(roughly $80\%$ clause-set overlap after canonicalization).
Replay addresses a different question: deterministic re-execution from the
recorded logs yields \(\mathrm{med}\,\Delta_{\Inv}^{\mathsf{replay}} = 0\) on
the same SAFE runs, confirming exact artifact-grade reproducibility.

The CP1-only row is not a dedicated shortest-first control, so \cref{tab:ablation} does not establish a learned-versus-shortest CP1 result. The additional gains from CP2 and CP3 show only that a rule defined solely by CP1 clause length is not a complete explanation of the full policy. A direct deterministic shortest-first ablation is not part of the present evaluation. The equality-controlled comparison with the dedicated postprocessor of~\cite{IvriiGurfinkelBelov2014SmallInvariants} is reported separately in \cref{subsec:eval-msis}.

\subsubsection{External coverage sanity check}
\label{subsec:legend-sanity}
On the 200-instance test list published with LeGend~\cite{miao2026legend}, a separate author run of \tool solved 189 instances under a 7200\,s per-instance timeout. LeGend reports 181 solved with IC3ref and 182 with ABC~\cite{miao2026legend}. This is a coverage-only sanity check, not a controlled tool comparison. Hosts, preprocessing, solver integration, lemma generation, certificate emission, and checker interfaces differ, and coverage is not the objective optimized here. The result therefore supports only the limited conclusion that the certificate-oriented evaluation is not accompanied by an obvious loss of broad solved coverage on that list.

\subsubsection{Per-instance tradeoffs, checker cost, and stability}
\label{subsec:eval-figs}
\cref{fig:tradeoff} provides a per-instance view on SAFE benchmarks solved by both baseline \pdr and \tool.
Of the 180 paired SAFE instances, 177 lie below \(y=1\) and three lie above it. The concentration below parity while runtime remains near \(x=1\) shows that the reductions are broad rather than driven by a few outliers. The three points above \(y=1\) are genuine checker-accepted cases, not checker errors. A locally valid clause order can steer \pdr toward a different proof trajectory and a larger final fixed point. With only three such cases, we cannot identify a reliable benchmark pattern. This non-monotonicity motivates the tunable objective in \cref{eq:objective-revised} rather than a claim that every local certificate-aware choice improves every instance.

\cref{fig:checker-cdf} shows the corresponding leftward shift
in independent checker time, with \tool+\textsc{inv-min} improving further.

\cref{fig:stability} quantifies how sensitive SAFE certificates are to nondeterminism by plotting the
distribution (over SAFE benchmarks) of the invariant-instability metric $\Delta_{\Inv}$ across $10$ seeds,
where $\Delta_{\Inv}$ is the Jaccard distance between canonicalized clause sets (\cref{subsec:policy-interface}).
The baseline \pdr exhibits substantial run-to-run variability (wide spread), indicating that different seeds often lead to meaningfully different resulting invariants.
In contrast, \tool shifts the distribution downward and tightens it, showing that certificate-aware guidance
tends to produce more consistent invariants across runs.
Finally, enforcing replay collapses the distribution to $\Delta_{\Inv}=0$, demonstrating that the replay log
fully determinizes the run and enables artifact-grade reproducibility.

\balance
\section{Conclusion}
\label{sec:conclusion}
\tool separates offline learning from certified cross-instance deployment and reuses one frozen ranker across held-out target instances without adding it to the trusted computing base. At deployment, the frozen model only orders verifier-generated proposals, while \pdr guards control clause insertion and pushing and the independent checker controls final acceptance. On held-out HWMCC'24 bit-level safety benchmarks, this guidance reduces certificate bulk and checker time with modest solving overhead. Its advantage also remains after applying the same MSIS postprocessor to paired baseline and \tool SAFE outputs, while replay logs make runs artifact-reproducible. Transfer remains an empirical performance question, whereas soundness remains checker-relative and policy-independent.

Future work includes exporting witness circuits for a common Certifaiger workflow and studying certificate-aware choices for SMT-based infinite-state systems~\cite{MebsoutTinelli2016ProofCertificates}, liveness certificates~\cite{GriggioRoveriTonetta2021CertifyingSAT}, and theorem-prover-based certification~\cite{SindoniEtAl2025TheoremProver}.

\medskip
\noindent
\textbf{Acknowledgements.}
The research of Arman Ferdowsi was funded in whole or in part by the Austrian Science Fund (FWF) 10.55776/ESP1705325.
The research of Laura Kovács was funded in whole or in part by the ERC Consolidator Grant ARTIST 101002685, the WWTF Grant ForSmart 10.47379/ICT22007, and the SBA Research COMET Center SBA-K1 NGC managed by the FFG.

\medskip
\noindent
\textbf{Artifact availability.}
The source code is publicly available at \url{https://github.com/armanferdowsi/CAPDR}.

\bibliographystyle{IEEEtran}
\bibliography{references}

\clearpage
\nobalance
\appendices

\section{Frame preservation by guarded clause insertion}
\label{app:frame-preservation}
Suppose \(C\) satisfies \cref{eq:block,eq:init-block,eq:relind} for an
obligation \((i,d)\), and let \(\widehat F_j=F_j\wedge C\) for
\(1\le j\le i\), while \(\widehat F_j=F_j\) otherwise. The boundary
\(F_0\Rightarrow F_1\) is preserved by \(\Init\Rightarrow C\). Boundaries
\(1\le j<i\) are strengthened on both sides, and later boundaries are either
strengthened only in the antecedent or unchanged; hence monotonicity is
preserved.

For any closed consecution index \(j<k\), if \(j+1\le i\), the consequent
frame is strengthened by \(C\). Old consecution gives \(F_{j+1}(X')\), and
monotonicity gives \(F_j\Rightarrow F_{i-1}\), so
\cref{eq:relind} gives \(C(X')\). If \(j+1>i\), the consequent is unchanged
and the antecedent is only strengthened or unchanged, so old consecution
suffices. Finally, \(F_j\Rightarrow\Prop\) is preserved because strengthening
frames cannot introduce new states.

\emph{Push preservation:}
If \(C\in F_i\) and \(F_i(X)\wedge\Trans(X,U,X')\Rightarrow C(X')\), then adding \(C\) to \(F_{i+1}\) preserves monotonicity because \(F_i\Rightarrow C\), and it preserves consecution at level \(i\) by the push guard. Other closed consecution obligations are unchanged or only strengthened in the antecedent.

\section{Auxiliary-Variable Extension Mode}
\label{app:aux-extension}
Some strengthening techniques (e.g., extended-resolution-style reasoning) introduce auxiliary Boolean state variables $Y$.
To keep certificates unambiguous while preserving soundness for the \emph{original} transition system over $X$, we support two conservative modes:
\begin{itemize}
  \item \emph{Projection mode:} emit an $X$-only candidate invariant $\Inv_X(X)$
  obtained from the internal auxiliary-rich representation by some elimination
  procedure; correctness is not assumed from that elimination step itself, but
  solely from subsequent acceptance by the base SAFE checker.
  \item \emph{Extension mode:} emit an invariant $\Inv(X,Y)$ together with an
  auxiliary-state relation $\Def(X,Y)$.
\end{itemize}

In extension mode, let $\Def'(X',Y')$ denote the primed copy of $\Def(X,Y)$, and define
\begin{align}
\Init^{\mathsf{ext}} &\triangleq \Init(X)\wedge \Def(X,Y), \notag\\
\Trans^{\mathsf{ext}} &\triangleq \Trans(X,U,X')\wedge \Def(X,Y)
\wedge \Def'(X',Y'). \notag
\end{align}
We require $\Def$ to be a conservative extension of the original dynamics under projection to $X$:
\begin{align}
\forall x.\quad \Init(x) &\Rightarrow \exists y.\;\Def(x,y),
\label{eq:defext-init_App}\\
\forall x,u,x',y.\;
\Trans(x,u,x')\wedge \Def(x,y)
&\Rightarrow \exists y'.\;\Def(x',y').
\label{eq:defext-step_App}
\end{align}

In extension mode, the emitted SAFE certificate is the pair $(\Def,\Inv(X,Y))$.
The independent checker must validate both components:
(i) it checks $\Inv$ on the extended system $(\Init^{\mathsf{ext}},\Trans^{\mathsf{ext}},\Prop)$, and
(ii) it validates that $\Def$ witnesses a conservative auxiliary-variable extension of the original system by a dedicated procedure whose correctness guarantees \cref{eq:defext-init,eq:defext-step}.
Only after both checks succeed may the checker accept the certificate.
Under these conditions, acceptance of an extended certificate implies safety of the original system.
We formalize this projection argument in \cref{prop:defext-sound}.

Projection mode emits an $X$-only candidate invariant and validates it with the
base SAFE checker. Extension mode uses the auxiliary-aware checker above.

\section{Deferred Formalization of the Local Objective}
\label{app:local-objective}
Fix a ranking event \(e=(r,\phi,A)\) and a continuation control law \(\mu\) for the remainder of the run.
The law \(\mu\) includes the frozen scorer, fixed tie-breaking, deterministic fallback behavior, and the fairness rule.
For CP1 and CP3, selecting \(a\) first means trying \(a\) first while leaving its SAT guard active.
If the guard fails, execution continues according to \(\mu\).
For CP2, selecting \(a\) first means servicing that obligation next.

The global objective in \cref{eq:objective-revised} induces the ideal local action value
\begin{align}
  Q_J(e,a) \triangleq
  -\mathbb{E}\bigl[J \mid
  & e,\ a\text{ is selected first}, \notag\\
  & \mu\text{ is used thereafter}\bigr],
  \qquad a\in A.
\end{align}
where the expectation is over the remaining solver and system nondeterminism.
Larger \(Q_J(e,a)\) means lower expected final cost.

Development does not need to compute this expectation exactly.
For selected alternatives, it obtains a sample or estimate \(\widehat J_e(a)\) from a measured continuation or a checkpointed rollout beginning at the same proof state.
Only alternatives from the same event and continuation protocol are compared.
The pairwise data set is
\[
  \mathcal{D}
  =\left\{
    (e,a^+,a^-)
    \ \middle|\
    \widehat J_e(a^+)<\widehat J_e(a^-)
  \right\}.
\]
A continuation that times out, exits because of a resource limit, or
produces a checker-rejected artifact is assigned
\(J_{\mathsf{fail}}=\langle\text{actual value}\rangle\), chosen above
the largest completed checker-accepted development cost.
This convention is used only for training labels.
Reported solved-set medians in \cref{sec:experiments} contain only checker-accepted runs.

For each \((e,a^+,a^-)\in\mathcal{D}\), let
\(\Delta x_e=x_e(a^+)-x_e(a^-)\).
For the linear scorer \(g_\theta(x)=\theta^\top x\), the evaluated prototype minimizes
\begin{equation}
  \mathcal{L}(\theta)
  =
  \sum_{(e,a^+,a^-)\in\mathcal{D}}
  \log\!\left(1+\exp\!\left[-\theta^\top\Delta x_e\right]\right)
  +\frac{\lambda_{\mathsf{reg}}}{2}\lVert\theta\rVert_2^2.
  \label{eq:pairwise-loss}
\end{equation}
The loss is small when the lower-cost alternative has the larger score.
Thus \(g_\theta\) is trained as an order-consistent surrogate for \(Q_J\), not as a proof oracle.

During deployment, \tool computes only \(g_\theta(x_e(a))\) and sorts the finite set \(A\).
It does not compute \(Q_J\), \(\widehat J_e\), or \(J\) online.
For CP1 and CP3, candidates are scanned in deterministic score order until a guard-valid action is committed or the list is exhausted.
For CP2, the highest-ranked obligation is selected subject to fairness.
CP1 cannot deadlock after the initial-state check because \(\neg d\in\mathcal{C}(d,i)\) satisfies \cref{eq:init-block,eq:relind} under the hypotheses of \cref{prop:base-blocker}.

\section{Exact Instability and Replay-Fidelity Formulas}
\label{app:replay-metrics}
For SAFE certificates, let \(\mathsf{canon}(\Inv)\) be the clause set
obtained by sorting literals within each clause, deleting repeated
literals within a clause, and deleting duplicate clauses.

For seeds $s,s'$, let
$A_s=\mathsf{canon}(\Inv_s)$ and
$A_{s'}=\mathsf{canon}(\Inv_{s'})$.
The \emph{cross-seed instability} is
\begin{equation}
\Delta_{\Inv}^{\mathsf{seed}}(s,s') \triangleq
\begin{cases}
0, & A_s\cup A_{s'}=\varnothing,\\[0.2em]
1-\dfrac{|A_s\cap A_{s'}|}{|A_s\cup A_{s'}|}, & \text{otherwise.}
\end{cases}
\end{equation}
This is the Jaccard distance on canonicalized clause sets and measures run-to-run certificate instability across seeds.

For a run executed with seed $s$, let $\Inv_s^{\mathsf{replay}}$ be the certificate produced by deterministic re-execution from that run's replay log, and set
$R_s=\mathsf{canon}(\Inv_s^{\mathsf{replay}})$.
The \emph{replay-fidelity distance} is
\begin{equation}
\Delta_{\Inv}^{\mathsf{replay}}(s) \triangleq
\begin{cases}
0, & A_s\cup R_s=\varnothing,\\[0.2em]
1-\dfrac{|A_s\cap R_s|}{|A_s\cup R_s|}, & \text{otherwise.}
\end{cases}
\end{equation}
Exact replay corresponds to $\Delta_{\Inv}^{\mathsf{replay}}(s)=0$.
We use $\Delta_{\Inv}^{\mathsf{seed}}$ for stability evaluation and
$\Delta_{\Inv}^{\mathsf{replay}}$ for replay validation. They measure different phenomena and are reported separately.

\end{document}